%% file: main.tex
\documentclass{aa}  

\usepackage{graphicx}
\usepackage{txfonts,dsfont}
\usepackage[colorlinks=true, linkcolor=blue, citecolor=blue, filecolor=blue, urlcolor=blue]{hyperref}

\usepackage{subcaption}
\usepackage{rotating}
\usepackage{multirow}
\usepackage{arydshln}
\usepackage{tikz,cancel,ulem}
\usepackage{pifont}
\usepackage{enumitem}
\usepackage{algorithm}
\usepackage{algpseudocode}
\usepackage[colorlinks=true,
    linkcolor=blue, citecolor=blue, filecolor=blue, urlcolor=blue]{hyperref} 
\usepackage{soul}

\usetikzlibrary{positioning}

\input{inputs/mycommands.tex}
\input{inputs/mycommands_specific.tex}

{\bfseries}{\itshape}

\begin{document}

\title{A plug-and-play approach with fast uncertainty quantification for weak-lensing mass mapping}

    \author{
            H. Leterme\inst{1,2, \thanks{\texttt{hubert.leterme@cea.fr}}}
        \and
            A. Tersenov\inst{2,3,\thanks{\texttt{atersenov@physics.uoc.gr}}}
        \and
            J. Fadili\inst{1}
        \and
            J.-L. Starck\inst{2,3}
    }

   \institute{Universit\'e Caen Normandie, ENSICAEN, CNRS, Normandie Univ, GREYC UMR 6072, F-14000 Caen, France
        \and
          Universit\'e Paris-Saclay, Universit\'e Paris Cit\'e, CEA, CNRS, AIM, 91191, Gif-sur-Yvette, France
         \and
             Institutes of Computer Science and Astrophysics, Foundation for Research and Technology Hellas (FORTH), Heraklion, Greece       
             }
             

 
\abstract
{
    Upcoming stage-IV surveys such as Euclid and Rubin will deliver vast amounts of high-precision data, opening new opportunities to constrain cosmological models with unprecedented accuracy. A key step in this process is the reconstruction of the dark matter distribution from noisy weak-lensing shear measurements.
}
{
    Current deep-learning-based mass-mapping methods achieve high reconstruction accuracy, but either require retraining a model for each new observed sky region (limiting practicality) or rely on slow Markov chain Monte Carlo sampling. Efficient exploitation of future survey data therefore calls for a new method that is accurate, flexible, and fast at inference. In addition, an uncertainty quantification with coverage guarantees is essential for a reliable cosmological parameter estimation.
}
{
    We introduce PnPMass, a plug-and-play approach for weak-lensing mass mapping. The algorithm produces point estimates by alternating between a gradient descent step with a carefully chosen data fidelity term and a denoising step implemented with a single deep-learning model trained on simulated data corrupted by Gaussian white noise.
    We also propose a fast sampling-free uncertainty quantification scheme based on moment networks, with calibrated error bars obtained through conformal prediction to ensure coverage guarantees. Finally, we benchmark PnPMass against model-driven and data-driven mass-mapping techniques.
}
{
    PnPMass achieves a performance close to that of the currently best deep-learning methods while offering fast inference. It converges in just a few iterations, and it requires only a single training phase, regardless of the noise covariance of the observations.
    It therefore combines flexibility, efficiency, and reconstruction accuracy while delivering tighter error bars than existing approaches, making it well suited for upcoming weak-lensing surveys.
}
{}

\keywords{cosmology: observations -- methods: statistical -- gravitational lensing: weak}


\maketitle

%


\section{Introduction}
\label{sec:intro}

In the coming years, next-generation observational instruments such as the Euclid space telescope and the Vera C.\@ Rubin Observatory will deliver galaxy catalogs with unprecedented precision and sky coverage. These surveys are expected to enhance our understanding of large-scale structures in the Universe and to refine existing models of its formation.

A central task in analyzing data from such surveys is mass mapping through weak gravitational lensing. The aim is to reconstruct the distribution of dark matter from noisy shear measurements, which capture the distortion of galaxy shapes caused by inhomogeneous mass distribution along the line of sight \citep{Kaiser1993}.
The resulting mass maps are then used to constrain the plausible ranges of cosmological parameters.
An accurate mass mapping is crucial for optimizing cosmological inference. Recent work by \citet{TersenovImpactWeaklensingMassmapping2025} showed that replacing the classical Kaiser-Squires method \citep{Kaiser1993} with a more advanced algorithm such as MCALens \citep{Starck2021} can significantly improve the figure of merit (FoM) in realistic weak-lensing pipelines.
Moreover, it is essential to rigorously quantify the uncertainty because errors in mass reconstruction can propagate to the estimation of cosmological parameters and might amplify discrepancies.

In recent years, deep-learning-based approaches have achieved the currently best reconstruction accuracy in weak-lensing mass mapping.
Two notable examples are DeepMass \citep{Jeffrey2020}, which directly predicts a point estimate from a noisy shear map, and DeepPosterior \citep{Remy2023}, a Bayesian method that samples from the posterior distribution of the mass map.
Despite their performances, both methods face limitations that are incompatible with upcoming stage-IV surveys such as Euclid and Rubin, where vast amounts of data are set to be delivered.
DeepMass offers fast inference because a single forward pass through the network suffices to produce a point estimate from an initial computationally inexpensive estimate derived from a noisy observation. However, as an end-to-end method, the network is sensitive to factors such as varying noise levels and masks, and it must therefore be retrained for each new observation. DeepPosterior, by contrast, is more flexible and does not require retraining, but inference is computationally expensive since multiple samples must be drawn to obtain a single point estimate.

We introduce PnPMass, a novel deep-learning-based method for weak-lensing mass mapping that like DeepMass generates accurate point estimates without posterior sampling and like DeepPosterior does not require retraining for each new observation.
Our approach builds on the plug-and-play (PnP) framework \citep{Venkatakrishnan2013}, which enables efficient reconstruction by integrating deep priors into physical models. PnP methods have become widespread in computational imaging and physical modeling \citep{Kamilov2023} and have seen growing theoretical support in recent years \citep{Ryu2019,Terris2020,Hurault2022,Ebner2024}, including convergence guarantees to a fixed point under appropriate conditions.
Our key insight is to show that with appropriate algorithmic design, particularly regarding the data-fidelity term, it is possible to train a denoiser on white Gaussian noise with a standard deviation randomly drawn from a given interval, regardless of the noise characteristics of the actual observation. This allows for a high flexibility across datasets and survey conditions while maintaining fast and accurate reconstructions.

Additionally, we propose an uncertainty quantification (UQ) method for PnPMass that does not rely on sampling. Specifically, we adapted the moment network approach by \citet{Jeffrey2020a}, originally developed for end-to-end methods such as DeepMass, to our PnP framework.
To obtain statistically calibrated uncertainty estimates, we further applied conformalized quantile regression (CQR) \citep{Romano2019} to derive per-pixel error bars with provable coverage guarantees.

Our experiments, conducted on the $\kappa$TNG simulated dataset \citep{Osato2021}, demonstrate that PnPMass achieves competitive reconstruction accuracy while remaining flexible. The model is trained only once, unlike DeepMass, and it only requires a few iterations at inference (significantly fewer than those needed by DeepPosterior for sampling). Furthermore, our UQ method, which requires only one additional iteration to compute per-pixel intervals, produces smaller error bars than DeepMass. A qualitative comparison of PnPMass with DeepMass and DeepPosterior is provided in Table~\ref{table:qualitative_comp}.

\begin{table}
    \caption{Qualitative comparison of the mass-mapping methods.}
    \label{table:qualitative_comp}
    \vspace{-5pt}
    \small
    \centering
	\setlength{\tabcolsep}{4pt}
	\renewcommand{\arraystretch}{1.1}
	\begin{tabular}{l | c c c c c}
        \hline\hline
		& Accurate & Flexible & Fast rec. & Fast UQ \\
		\hline
		DeepMass & \cmark & \xmark$^\ast$ & \cmark & \cmark \\
		DeepPosterior & \cmark & \cmark & \xmark & \xmark \\
		\hline
		\textbf{PnPMass (ours)} & \cmark & \cmark & \cmark & \cmark \\
        \hline
	\end{tabular}
    \begin{minipage}{\linewidth}
    \vspace{5pt}
    \footnotesize
    \textbf{Notes.} 
    $^\ast$Requires specific retraining for each new observation.
    \end{minipage}
\end{table}

This paper is organized as follows. In Sect.~\ref{sec:background} we review the mass-mapping problem and existing data-driven reconstruction methods. In Sect.~\ref{sec:pnp} we present the PnPMass framework. In Sect.~\ref{sec:uq} we introduce our UQ pipeline, including the method based on moment networks in Sect.~\ref{subsec:uq_mn}, and CQR-based calibration in Sect.~\ref{subsec:uq_cqr}. Sections~\ref{sec:experiments} and \ref{sec:results} detail our experimental setup and results. We conclude and discuss future directions in Sect.~\ref{sec:concl}.

\section{Background on weak-lensing mass mapping}
\label{sec:background}

We denote deterministic vectors with bold lower-case Greek or Latin letters (\eg $\convmap$), while random vectors are represented with bold sans-serif capital letters (\eg $\convmapRand$). Deterministic matrices are indicated by standard bold capital letters (\eg $\convToShear$). Furthermore, indexing is done using brackets ($\convmap[k]$,  $\convmapRand[k]$, or $\convToShear[k,\, l]$).

\subsection{The mass-mapping problem}
\label{subsec:background_mm}

This section provides a brief overview of the weak-lensing mass-mapping problem, for which a comprehensive review has been proposed by \citet{Kilbinger2015}. The objective is to recover a convergence map $\convmap \in \mathR^{\imgsize^2}$, which causes isotropic dilations of background galaxies, from a shear map $\shearmap \in \mathC^{\imgsize^2}$, which induces anisotropic stretching. Both fields are discretized over a square grid of size $\imgsize \times \imgsize$, with $\imgsize \in \mathN$, and represented as flattened one-dimensional vectors. In the weak-lensing regime, the convergence is proportional to the projected mass along the line of sight between the background galaxies and the observer \citep{Kaiser1993}. Reconstructing the convergence field is thus equivalent to estimating the projected mass distribution.

In practice, only variations in $\convmap$ around its mean value can be recovered because of the mass-sheet degeneracy. Similarly, only the variations in $\shearmap$ around its mean value contribute to the reconstruction. We therefore assume, without loss of generality, that both $\convmap$ and $\shearmap$ are zero-centered. Under this assumption, a first-order relation between shear and convergence can be expressed as
\begin{equation}
    \shearmap = \convToShear \convmap,
\label{eq:invprob0}
\end{equation}
where $\convToShear \in \mathC^{\imgsize^2 \times \imgsize^2}$ is a linear operator, defined as
\begin{equation}\label{eq:defconvtoshear}
    \convToShear := \fouriermatrHerm\fourierConvToShear\fouriermatr,
\end{equation}
where $\fouriermatr$ and its adjoint (Hermitian, or conjugate, transpose) $\fouriermatrHerm \in \mathC^{\imgsize^2 \times \imgsize^2}$ represent the two-dimensional discrete Fourier and inverse Fourier transforms, respectively. Moreover, $\fourierConvToShear \in \mathC^{\imgsize^2 \times \imgsize^2}$ is a diagonal matrix satisfying $\fourierConvToShear[0,\, 0] = 0$, and for all $(k_1,\, k_2) \in \oneto{\imgsize-1}^2 \setminus \{(0,\, 0)\}$,%
\begin{equation}\label{eq:deffourierconvtoshear}
    \fourierConvToShear\left[
        \imgsize k_1 + k_2,\, \imgsize k_1 + k_2
    \right] := \frac{(k_1 + i k_2)^2}{k_1^2 + k_2^2}.
\end{equation}
Because $\fourierConvToShear$ is zero at the zero frequency, the operator $\convToShear$ is not invertible. Its kernel is the subspace of constant vectors, whereas its image is the subspace of zero-mean vectors. The pseudo-inverse of $\convToShear$ is equal to its adjoint $\convToShearHerm$, which shares the same kernel and image. Therefore, the convergence map can be recovered via
\begin{equation}
    \convmap = \convToShearHerm \shearmap,
\label{eq:invprob0_exact}
\end{equation}
yielding an exact reconstruction when both $\shearmap$ and $\convmap$ are restricted to the subspace of zero-mean vectors.

A direct observation of the shear, however, is impossible in practice. Instead, a noisy estimator can be obtained by binning a galaxy catalog at the desired resolution and averaging the corresponding galaxy ellipticities within each pixel. Then, \eqref{eq:invprob0} becomes
\begin{equation}
    \shearmap = \mask\convToShear \convmap + \noise,
\label{eq:invprob_masked}
\end{equation}
where $\mask \in \tuple{0}{1}^{\imgsize^2 \times \imgsize^2}$ is a diagonal matrix encoding the observational mask (\eg, survey boundaries or contamination from bright stars in the foreground), and $\noise \in \mathC^{\imgsize^2}$ is a realization of proper complex Gaussian noise $\noiseRand$ with zero-mean and diagonal covariance matrix $\covmatrNoise \in \mathR^{\imgsize^2 \times \imgsize^2}$. For any pixel $k \in \smalloneto{\imgsize^2}$ where $\mask[k,\, k] = 1$, the variance is given by
\begin{equation}
    \covmatrNoise[k,\, k] = \stdval_{\ellip}^2 / \ngalperpix{k},
\label{eq:covmatrnoise}
\end{equation}
where $\stdval_{\ellip}^2 > 0$ is the variance of intrinsic ellipticities, and $\ngalperpix{k}$ is the number of galaxies observed in pixel $k$.

For pixels $k \in \smalloneto{\imgsize^2}$ such that $\mask[k,\, k] = 0$ (\ie, within masked regions), we adopted a strategy similar to that proposed by \citet{Starck2021,Remy2023}: the noise variance $\covmatrNoise[k,\, k]$ was artificially set to a very high value in order to reduce the signal-to-noise ratio (S/N) to near zero.
Then, \eqref{eq:invprob_masked} can be approximated by
\begin{equation}
    \shearmap = \convToShear \convmap + \noise.
\label{eq:invprob}
\end{equation}
In the remainder of the paper, pixels outside and within the mask are referred to as active pixels and masked pixels, respectively.

\subsection{Overview of mass-mapping methods}
\label{subsec:background_relatedwork}

\subsubsection{Classical approaches}
\label{subsubsec:background_relatedwork_classical}

The most simple method for the mass-mapping problem was proposed by \citet{Kaiser1993} and is referred to as the Kaiser-Squires (KS) solution. It consists of applying \eqref{eq:invprob0_exact} to the noisy shear map $\shearmap$, followed by Gaussian smoothing to reduce the noise.
More recent approaches have proposed to adopt a variational framework by solving minimization problems of the form
\begin{equation}
    \convmapEstimate \in \argmin_{\convmap' \in \mathR^{\imgsize^2}} \datafidelity_{\shearmap}(\convToShear\convmap') + \regfun(\convmap'),
\label{eq:minprob}
\end{equation}
where $\datafidelity_{\shearmap}(\convToShear\convmap')$ represents a data fidelity term measuring the resemblance between $\shearmap$ and $\convToShear\convmap'$, while $\regfun$ denotes a handcrafted regularization function penalizing unexpected solutions.
Notable examples include iterative Wiener filtering \citep{Bobin2012}, a forward-backward splitting algorithm with $\ell^2$ regularization corresponding to a Gaussian prior with a predefined power spectrum; the GLIMPSE2D algorithm \citep{Lanusse2016}, which enforces sparsity in a wavelet dictionary; and the MCALens algorithm \citep{Starck2021}, which models the solution as a combination of a Gaussian and a sparse component in a wavelet dictionary.

\subsubsection{Data-driven approaches}
\label{subsubsec:background_relatedwork_dl}

Deep-learning models have demonstrated significantly improved reconstruction accuracy over classical approaches by learning the distribution of convergence maps from large simulation datasets. From a Bayesian perspective, where the convergence map $\convmap$ is viewed as the realization of a random vector $\convmapRand$ with an unknown distribution, the inverse problem \eqref{eq:invprob} becomes
\begin{equation}
    \shearmapRand = \convToShear \convmapRand + \noiseRand,
\label{eq:invprob_randbayes}
\end{equation}
where $\shearmapRand$ denotes the random vector from which $\shearmap$ is drawn.

One of the deep-learning-based approaches is DeepMass \citep{Jeffrey2020}. It takes the noisy shear map $\shearmap$ as input, computes a naive estimate such as the KS or iterative Wiener solution, passes it through a deep neural network (typically, a UNet), and outputs an enhanced reconstruction $\convmapEstimate_{\deepmass}$. The model is trained on simulated pairs of convergence maps and their corresponding noisy shear maps, $\trainingsetMM$, following the forward model \eqref{eq:invprob}.

While DeepMass achieves high reconstruction accuracy, it has a major limitation: it requires retraining for each new observation $\shearmap$, as the noise covariance $\covmatrNoise$ depends on the galaxy number density in each pixel. As a result, the training data must be adapted accordingly, which reduces the flexibility of the method.

As an alternative, DeepPosterior \citep{Remy2023} generates samples from the posterior $\condpriorpdf{\cond{\convmapRand}{\shearmapRand=\shearmap}}{\cdot}$ by using a Markov chain Monte-Carlo (MCMC) algorithm. In this approach, the prior $\priorpdf{\convmapRand}{\cdot}$ is implicitly learned via neural score matching. Then, averaging over the samples yields a point estimate $\convmapEstimate_{\deepposterior}$. This method directly allows for UQ by computing sample statistics. In contrast to the previous approach, DeepPosterior only needs to be trained once, regardless of the noise properties. However, inference is slow because many samples are required to produce a single estimate.

Other methods have been proposed, but have similar limitations. \Citet{Shirasaki2019,Shirasaki2021,AoyamaDenoisingWeak2025}
introduced mass-mapping techniques based on conditional generative adversarial networks (GANs) or diffusion models.
Recently, \citet{WhitneyGenerativeModellingMassmapping2025} developed a fast sampling method based on conditional Wasserstein GANs, which significantly reduced the generation time per sample compared to DeepPosterior.
Despite their competitive performances, the above methods are limited in the same way as DeepMass: the training phase is specific to a given observation, and the methods therefore lack flexibility. More precisely, given a newly observed sky area or another galaxy survey, these models must be retrained relatively to the new configuration, with a specific mask $\mask$ and per-pixel noise level $\covmatrNoise$. Our approach specifically addresses this issue.

\section{Proposed approach based on plug-and-play}
\label{sec:pnp}

In this section, we introduce PnPMass, a PnP-based algorithm specifically tailored for the mass-mapping problem.
We cast the problem into a real-valued formulation, where we define%
\begin{equation}
    \tilde\shearmap := \begin{pmatrix}
        \Real\shearmap \\ \Imag\shearmap
    \end{pmatrix} \in \mathR^{2\imgsize^2},
    \qquad
    \tilde\noise := \begin{pmatrix}
        \Real\noise \\ \Imag\noise
    \end{pmatrix} \in \mathR^{2\imgsize^2},
\end{equation}
\begin{equation}\label{eq:deftildeAtildeSigma}
    \tilde\convToShear := \begin{pmatrix}
        \Real\convToShear \\ \Imag\convToShear
    \end{pmatrix} \in \mathR^{2\imgsize^2 \times \imgsize^2},
    \quad
    \tilde\covmatrNoise := \frac12\begin{pmatrix}
        \covmatrNoise & \bzero \\
        \bzero & \covmatrNoise
    \end{pmatrix} \in \mathR^{2\imgsize^2 \times 2\imgsize^2}.
\end{equation}
Then, \eqref{eq:invprob} becomes
\begin{equation}
    \tilde\shearmap = \tilde\convToShear \convmap + \tilde\noise,
\label{eq:invprob_lifted}
\end{equation}
where the noise $\tilde\noise$ is a realization of $\tilde\noiseRand \sim \MN(\bzero,\, \tilde\covmatrNoise)$.

In Sect.~\ref{subsec:pnp_fixedpoint} we introduce a fixed-point iteration composed of a forward step, involving a linear operator $\tilde\genericop$ and a step size $\stepsize > 0$, followed by a backward step using a deep denoiser $\deepmodelTrained$.
Sect.~\ref{subsec:pnp_smallresidual} establishes properties on $\deepmodelTrained$ that enable an accurate recovery of the convergence map $\convmap$ from the noisy shear input $\shearmap$.
In Sect.~\ref{subsec:pnp_operator} we show that a suitable choice of $\tilde\genericop$ allows the denoiser to be trained on white Gaussian noise, regardless of the noise covariance matrix $\tilde\covmatrNoise$. Sect.~\ref{subsec:pnp_stepsize} then derives bounds on $\stepsize$ to guarantee convergence to a fixed point.
Additionally, Sect.~\ref{subsec:pnp_denoiserprops} details properties required for the denoiser, including fixed-point conditions and noise-level awareness.
Finally, in Sect.~\ref{subsec:pnp_respnp}, we present a variant of PnPMass that incorporates prior knowledge of the Gaussian structure of convergence maps outside peak regions dominated by large matter aggregates.

\subsection{Fixed-point iteration}
\label{subsec:pnp_fixedpoint}

Given an observation $\tilde\shearmap \in \mathR^{2\imgsize^2}$ and a step-size $\stepsize > 0$, we introduce a forward operator $\forwardop{\tilde\shearmap}(\cdot,\, \stepsize): \mathR^{\imgsize^2} \to \mathR^{\imgsize^2}$, defined for any input $\convmap'$ by
\begin{equation}
    \forwardop{\tilde\shearmap}(\convmap',\, \stepsize) := \convmap' + \stepsize\tilde\genericop(\tilde\shearmap - \tilde\convToShear\convmap'),
\label{eq:forwardop}
\end{equation}
where $\tilde\genericop \in \mathR^{\imgsize^2 \times 2\imgsize^2}$ denotes a linear operator to be defined. We also consider a deep-learning-based denoiser $\deepmodelTrained: \mathR^{\imgsize^2} \to \mathR^{\imgsize^2}$ with trained parameters $\hat\paramDeepmodel$.
The PnP method, inspired by standard operator splitting methods in optimization theory for solving inverse problems (see \citealp{Starck2015} and Appendix~\ref{appendix:pnp_fb}) consists of iterating the operator $\fixedpointiter{\tilde\shearmap}(\cdot,\, \stepsize)$ taken as the composition of the forward operator and the trained denoiser,
\begin{equation}
    \fixedpointiter{\tilde\shearmap}(\cdot,\, \stepsize) := \deepmodelTrained \circ \forwardop{\tilde\shearmap}(\cdot,\, \stepsize).
\label{eq:fixpointoperator}
\end{equation}

Toward studying the properties of this operator, we state our main assumptions below.
\begin{enumerate}[label=(H.\arabic*),leftmargin=*]
    \item \label{assum:1}
    $\deepmodelTrained: \orthcomplBA \to \orthcomplBA$; 
    \item \label{assum:2}
     $\deepmodelTrained$ is non-expansive on $\orthcomplBA$, \ie, $\lipschitzcst$-Lipschitz continuous on $\orthcomplBA$ with $\lipschitzcst \leq 1$; 
    \item \label{assum:3}
    $\tilde\genericop\tilde\convToShear$ is symmetric positive semidefinite;
    \item \label{assum:3'}
    $\image \tilde\genericop \subset \orthcomplBA$;
    \item \label{assum:4}
    Let $\lambda_{\min}$ and $\lambda_{\max}=\smallnorm{\tilde\genericop\tilde\convToShear}$ denote the minimum and maximum nonzero eigenvalues of $\tilde\genericop\tilde\convToShear$, respectively. We also introduce
    \begin{equation}
        \rho^\star := \frac{\lambda_{\max}-\lambda_{\min}}{\lambda_{\max}+\lambda_{\min}}.
    \label{eq:lowerbound_rho}
    \end{equation}
    For fixed $\rho \in \intervalexclr{\rho^\star}{1}$, the step size $\stepsize$ satisfies
    \begin{equation}
        \frac{1 - \rho}{\lambda_{\min}} \leq \stepsize \leq \frac{1 + \rho}{\lambda_{\max}} .
    \label{eq:stepsize}
    \end{equation}   

\end{enumerate}

Under the above assumptions, the operator $\fixedpointiter{\tilde\shearmap}(\cdot,\, \stepsize)$ enjoys the following properties:
\begin{proposition}
    \label{prop:fixedpoint}
    Under assumptions \ref{assum:1}--\ref{assum:4}, the operator $\fixedpointiter{\tilde\shearmap}(\cdot,\, \stepsize)$ admits a unique fixed point $\convmapEstimate \in \orthcomplBA$.
    Moreover, the sequence of iterates $(\convmap^{(k)})_{k \in \mathN}$ defined by
    \begin{equation}
        \convmap^{(k+1)} = \fixedpointiter{\tilde\shearmap}(\convmap^{(k)},\, \stepsize),
    \label{eq:fixedpointseries}
    \end{equation}
    starting at $\convmap^{(0)} \in \orthcomplBA$, converges linearly to $\convmapEstimate$ with a convergence rate $\rho$.
\end{proposition}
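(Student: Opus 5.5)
Since \orthcomplBA is not defined in the visible text, I take it to be the orthogonal complement of the kernel of $\tilde\convToShear$ (equivalently of $\tilde\genericop\tilde\convToShear$ under \ref{assum:3}), i.e.\ the subspace of zero-mean maps. It is a closed subspace of $\mathR^{\imgsize^2}$, hence complete. The plan is to show that $\fixedpointiter{\tilde\shearmap}(\cdot,\,\stepsize)$ maps $\orthcomplBA$ into itself and is a $\rho$-contraction there. The Banach fixed-point theorem then gives existence and uniqueness of $\convmapEstimate\in\orthcomplBA$, together with the bound $\|\convmap^{(k)}-\convmapEstimate\|\le\rho^k\|\convmap^{(0)}-\convmapEstimate\|$, which is the claimed linear rate.

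\textbf{Step 1 (invariance).} Let $\convmap'\in\orthcomplBA$. By \ref{assum:3'}, $\stepsize\tilde\genericop(\tilde\shearmap-\tilde\convToShear\convmap')\in\image\tilde\genericop\subset\orthcomplBA$. Hence $\forwardop{\tilde\shearmap}(\convmap',\stepsize)\in\orthcomplBA$, and by \ref{assum:1} applying $\deepmodelTrained$ keeps us in $\orthcomplBA$.

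\textbf{Step 2 (contraction of the forward step).} For $\convmap_1,\convmap_2\in\orthcomplBA$ the data term cancels:
\begin{equation*}
\forwardop{\tilde\shearmap}(\convmap_1,\stepsize)-\forwardop{\tilde\shearmap}(\convmap_2,\stepsize)=(\bI-\stepsize\tilde\genericop\tilde\convToShear)(\convmap_1-\convmap_2).
\end{equation*}
By \ref{assum:3}, $\bM:=\tilde\genericop\tilde\convToShear$ is symmetric positive semidefinite, so it diagonalizes in an orthonormal basis. Its kernel is the kernel of $\tilde\convToShear$: $\bM\convmap=0$ gives $\convmap^\top\bM\convmap=0$, which should force $\tilde\convToShear\convmap=0$. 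This is where I will need to check that $\bM$ and $\tilde\convToShear$ have the same kernel. I would argue it via $\image\tilde\genericop\subset\orthcomplBA$ and the symmetry of $\bM$: the range of a symmetric $\bM$ equals the orthogonal complement of its kernel, and that range lies in $\orthcomplBA$, so $\ker\bM\supset\ker\tilde\convToShear$ and the ranks must match. I expect this identification to be the main obstacle.

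Granting it, $\orthcomplBA$ is spanned by eigenvectors of $\bM$ with eigenvalues in $[\lambda_{\min},\lambda_{\max}]$. The restriction of $\bI-\stepsize\bM$ to $\orthcomplBA$ therefore has operator norm $\max(|1-\stepsize\lambda_{\min}|,|1-\stepsize\lambda_{\max}|)$. The step-size bounds \eqref{eq:stepsize} give $1-\stepsize\lambda_{\min}\le\rho$ and $\stepsize\lambda_{\max}-1\le\rho$. Since $\lambda_{\min}\le\lambda_{\max}$, we also have $1-\stepsize\lambda_{\max}\le 1-\stepsize\lambda_{\min}\le\rho$ and $\stepsize\lambda_{\min}-1\le\stepsize\lambda_{\max}-1\le\rho$. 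So the norm is at most $\rho$. The condition $\rho\ge\rho^\star$ is exactly what makes the interval in \eqref{eq:stepsize} nonempty; I will verify this by cross-multiplying.

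\textbf{Step 3 (composition).} By \ref{assum:2}, $\deepmodelTrained$ is $\lipschitzcst$-Lipschitz on $\orthcomplBA$ with $\lipschitzcst\le1$. Composing with Step 2, $\fixedpointiter{\tilde\shearmap}(\cdot,\stepsize)$ is $\lipschitzcst\rho\le\rho<1$-Lipschitz on $\orthcomplBA$. Banach's theorem then gives the unique fixed point, and iterating $\|\convmap^{(k+1)}-\convmapEstimate\|\le\rho\|\convmap^{(k)}-\convmapEstimate\|$ gives the linear rate $\rho$.
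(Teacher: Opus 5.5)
The overall structure of your proof is the paper's: the forward step preserves the subspace by \ref{assum:3'} and the denoiser preserves it by \ref{assum:1}; $\BBI-\stepsize\tilde\genericop\tilde\convToShear$ restricted to that subspace has norm at most $\rho$ (the paper's Lemma~1); non-expansiveness and Banach--Picard then finish the argument. There is, however, a genuine gap at the step you flagged, and it comes from misidentifying the subspace.

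In the statement and in \ref{assum:1}, \ref{assum:3'}, the subspace is $(\ker\tilde\genericop\tilde\convToShear)^\perp$, not $(\ker\tilde\convToShear)^\perp$. The two are \emph{not} equivalent under \ref{assum:3}. Your argument that they coincide only gives $\ker\tilde\convToShear\subset\ker\tilde\genericop\tilde\convToShear$, which holds trivially for every $\tilde\genericop$. Nothing in \ref{assum:1}--\ref{assum:4} forces the ranks to match.

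Here is a concrete failure. Let $\mathbf{Q}$ be the orthogonal projector onto a nonzero proper subspace of zero-mean maps, and set $\tilde\genericop:=\mathbf{Q}\tildeConvToShearTransp$. By \eqref{eq:almostidentity}, $\tilde\genericop\tilde\convToShear=\mathbf{Q}$. This matrix is symmetric positive semidefinite with $\lambda_{\min}=\lambda_{\max}=1$, and $\image\tilde\genericop$ consists of zero-mean maps. So under your reading all hypotheses hold, for instance with the identity as denoiser and $\stepsize=1$. Yet $\BBI-\stepsize\mathbf{Q}$ has eigenvalue $1$ on the zero-mean maps lying in $\ker\mathbf{Q}$. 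The forward step is therefore not a contraction, and the fixed points form a whole affine subspace. Step~2 cannot be completed as you set it up; under your reading the statement is actually false.

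The repair is to work with $\mathV:=(\ker\tilde\genericop\tilde\convToShear)^\perp$, as the paper does. Since $\tilde\genericop\tilde\convToShear$ is symmetric, $\mathV=\image(\tilde\genericop\tilde\convToShear)$. This space is spanned by the eigenvectors with nonzero eigenvalues, all of which lie in $[\lambda_{\min},\lambda_{\max}]$. Your norm computation in Step~2 then applies directly, with no kernel identification needed, and Steps~1 and~3 are unchanged.

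The equality $(\ker\tilde\genericop\tilde\convToShear)^\perp=(\ker\tilde\convToShear)^\perp$ (the zero-mean maps) holds only for the particular choice $\tilde\genericop=\tildeConvToShearTransp\tilde\covmatrNoise^{-1/2}$. There one has $\convmap^\top\tilde\genericop\tilde\convToShear\convmap=\|\tilde\covmatrNoise^{-1/4}\tilde\convToShear\convmap\|^2$ (cf.\ Appendix~\ref{appendix:orthogonalcompl}). It is not part of Proposition~1.

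With this correction your proof coincides with the paper's. Your one-sided check that \eqref{eq:stepsize} gives $|1-\stepsize\lambda|\le\rho$ for all $\lambda\in[\lambda_{\min},\lambda_{\max}]$ is all the proposition requires; Lemma~1 also states the converse, which is not used.
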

\begin{proof}
    See Appendix~\ref{appendix:proof_fixedpoint}.
\end{proof}

\subsection{Which noise should be used to train the denoiser ?}
\label{subsec:pnp_smallresidual}

Under the above assumptions, $\convmapEstimate$ obeys the fixed-point equation
\begin{equation}
\convmapEstimate =  \deepmodelTrained\Bigl(\convmapEstimate + \stepsize\tilde\genericop(\tilde\shearmap - \tilde\convToShear\convmapEstimate)\Bigr).
\label{eq:fixedpoint}
\end{equation}
Assessing whether the fixed point $\convmapEstimate$ lies sufficiently close to the true convergence map $\convmap$ is a challenging question that falls beyond the scope of this paper. A mathematical framework is nevertheless sketched in Appendix~\ref{appendix:convprops}, thereby constraining the noise level at which the denoiser $\deepmodelTrained$ must be trained.

Our intuition can be summarized as follows. Assuming the residual $\residualEstimate := \convmapEstimate - \convmap$ is small enough, we obtain by applying the forward step to the fixed point $\convmapEstimate$ the approximation
\begin{align}
    \convmapEstimate_0 := \forward_{\tilde\shearmap}(\convmapEstimate,\ \stepsize)
        &= \convmap + \residualEstimate + \stepsize\tilde\genericop\bigl(
            -\tilde\convToShear\residualEstimate + \tilde\noise
        \bigr)
\label{eq:fixedpointcondition0}\\
        &\approx \convmap + \stepsize\tilde\genericop\tilde\noise.
\label{eq:fixedpointcondition}
\end{align}
Therefore, application of the operator $\fixedpointiter{\tilde\shearmap}$ introduced in \eqref{eq:fixpointoperator} to the fixed point $\convmapEstimate$ yields
\begin{align}
    \convmapEstimate
        &= \deepmodelTrained\bigl(
            \convmapEstimate_0
        \bigr) \approx \deepmodelTrained\bigl(
            \convmap + \stepsize\tilde\genericop\tilde\noise
        \bigr).
\end{align}
Thus, $\deepmodelTrained$ acts as a denoiser on the true convergence map $\convmap$ corrupted by a Gaussian noise $\noise_0 := \stepsize\tilde\genericop\tilde\noise \in \mathR^{\imgsize^2}$, which is an outcome of
\begin{equation}
    \noiseRand_0 := \stepsize\tilde\genericop\tilde\noiseRand \sim \MN(\bzero,\, \covmatrNoise_0),\qwith \covmatrNoise_0 := \stepsize^2\tilde\genericop\tilde\covmatrNoise\tilde\genericop^\top.
\label{eq:pnpnoise}
\end{equation}

In practice, we trained the denoiser by minimizing the empirical quadratic risk,
\begin{equation}\label{eq:traindenoiser}
\hat\paramDeepmodel \in \argmin_{\paramDeepmodel} \frac{1}{n} \sum_{i=1}^n \norm{\deepmodelParameterized\Bigl(\convmap_i+\noise_{0,\, i}\Bigr) - \convmap_i}^2 ,
\end{equation}
using a dataset of $\sizeTrainingset$ training samples. The ground-truth convergence maps $\convmap_i$ were generated from simulations based on the $\Lambda$CDM model, and the noisy inputs were computed from $\convmap_i$ by adding noise $\noise_{0,\, i}$ drawn according to \eqref{eq:pnpnoise}.

\subsection{Choice of the operator $\tilde\genericop$}
\label{subsec:pnp_operator}

One of the main objectives of this paper is to implement a mass-mapping method whose training phase is independent of the training noise covariance matrix $\tilde\covmatrNoise$. Clearly, given our discussion in Sect.~\ref{subsec:pnp_smallresidual}, the covariance matrix $\covmatrNoise_0$ in \eqref{eq:pnpnoise} must be independent of $\tilde\covmatrNoise$.
To this end, we used the degree of freedom offered by the choice of $\tilde\genericop$. More precisely, we propose to make the following choice:
\begin{equation}
    \tilde\genericop := \tildeConvToShearTransp\tilde\covmatrNoise^{-1/2},
\label{eq:genericop}
\end{equation}
in which case~\eqref{eq:pnpnoise} and \eqref{eq:forwardop} become
\begin{equation}
	\covmatrNoise_0 = \tau^2\tildeConvToShearTransp\tilde\convToShear \qand
    \forwardop{\tilde\shearmap} (\convmap',\, \stepsize) := \convmap' + \stepsize\tildeConvToShearTransp\tilde\covmatrNoise^{-1/2}(\tilde\shearmap - \tilde\convToShear\convmap').
\label{eq:forwardop2}
\end{equation}
We caution here that strictly speaking, $\covmatrNoise_0$ is not a covariance matrix as it is only semidefinite positive with $\ker \covmatrNoise_0 = \ker \convToShear$, the subspace of constant vectors. This is not a concern, however, as we are only interested in random perturbations along $\orthcomplBA = (\ker \convToShear)^\perp$, where $\covmatrNoise_0$ is symmetric positive definite.

In Appendix~\ref{appendix:orthogonalcompl} we show that with the choice \eqref{eq:genericop}, $\tilde\genericop\tilde\convToShear = \tildeConvToShearTransp\tilde\covmatrNoise^{-1/2}\tilde\convToShear$ verifies assumptions \ref{assum:3} and \ref{assum:3'}. We even have the equality $\image \tilde\genericop = \orthcomplBA$.

 \eqref{eq:forwardop2} shows that the residual $\tilde\shearmap - \tilde\convToShear\convmap'$ is multiplied by $\tilde\covmatrNoise^{-1/2}$, which corresponds to a noise-whitening step. This is in a stark contrast with the choice $\tilde\genericop \!:=\! \tildeConvToShearTransp\tilde\covmatrNoise^{-1}$, which yields $\forwardop{\tilde\shearmap} (\convmap',\, \stepsize) := \convmap' + \stepsize\tildeConvToShearTransp\tilde\covmatrNoise^{-1}(\tilde\shearmap - \tilde\convToShear\convmap')$. This alternative choice is the one dictated by the standard variational (or Bayesian MAP) perspective with a data fidelity
$\tfrac{1}{2}\bignorm{\tilde\covmatrNoise^{-1/2}(\tilde\shearmap - \tilde\convToShear\convmap')}^2$
as negative log-likelihood, whose gradient is $\tildeConvToShearTransp\tilde\covmatrNoise^{-1}(\tilde\shearmap - \tilde\convToShear\convmap')$.
This clearly shows that the widely adopted optimization or Bayesian perspective on PnP is not necessarily the most appropriate and might even be misleading in the PnP context (see also the discussion in Appendix~\ref{appendix:pnp_fb}). The fixed-point perspective, on the other hand, offers more insight and flexibility.%

In Appendix~\ref{appendix:almostidentity} we show that
\begin{equation}
\tildeConvToShearTransp\tilde\convToShear = \BBI - \frac1{\imgsize^2},
\label{eq:almostidentity}
\end{equation}
(the constant $1 / \imgsize^2$ is subtracted from all matrix elements). In practice, the image size $\imgsize$ is sufficiently large to neglect the non-diagonal elements. Therefore, from \eqref{eq:forwardop2} and \eqref{eq:almostidentity}, it is valid to take the approximation
\begin{equation}
    \covmatrNoise_0 \approx \stepsize^2 \BBI.
\label{eq:covmatrwhitenoise}
\end{equation}
In turn, the denoiser $\deepmodelParameterized$ is trained on zero-mean white Gaussian noise with a standard deviation equal to the step size $\stepsize$. In Sect.~\ref{subsec:pnp_stepsize} we discuss the selection of an appropriate range for $\stepsize$.

\subsection{Step-size selection}
\label{subsec:pnp_stepsize}

As discussed in Sect.~\ref{subsec:pnp_fixedpoint}, the existence and uniqueness of a fixed point as well as convergence of the PnP iteration \eqref{eq:fixedpointseries} to this fixed point require that $\stepsize$ satisfies \ref{assum:4}. When we substitute the expression for $\tilde\genericop$ from~\eqref{eq:genericop}, condition \eqref{eq:stepsize} specializes to
\begin{equation}
 \frac{1 - \rho}{\lambda_{\min}} \leq \stepsize \leq \frac{1 + \rho}{\lambda_{\max}} , \qwhere \lambda_{\max} = \bignorm{\tildeConvToShearTransp\tilde\covmatrNoise^{-1/2}\tilde\convToShear} .
\label{eq:stepsize2}
\end{equation}
It is well known that $\lambda_{\max}$ can be accurately computed via the power iteration method. In Appendix~\ref{appendix:powitlambdamin} we also discuss how the power iteration can be used to efficiently compute $\lambda_{\min}$.  

A simple estimate of the upper bound for $\stepsize$ can also be obtained via an upper bound of $\lambda_{\max}$. We observe that
\begin{align*}
    \bignorm{\tildeConvToShearTransp\tilde\covmatrNoise^{-1/2}\tilde\convToShear}
        &\leq \bignorm{\tildeConvToShearTransp\tilde\convToShear} \bignorm{\tilde\covmatrNoise^{-1/2}} \\
        &\leq \bignorm{\tilde\covmatrNoise^{-1/2}} = 1 / \min_k \tilde\stdval_k,
\end{align*}
where we used \eqref{eq:almostidentity}, and $\tilde\stdval_k := \tilde\covmatrNoise[k,\, k]^{1/2}$ denotes the noise standard deviation at pixel $k$. Therefore,
\begin{equation}
    \stepsize \leq (1+\rho) \min_k \tilde\stdval_k,
\label{eq:stepsize3}
\end{equation}
is a sufficient condition for the right-hand side of~\eqref{eq:stepsize2} to hold. Thus, the choice of the largest step size is  essentially governed by the lowest noise level in the input shear map, that is, the pixel with the largest number of observed galaxies.

\subsection{Denoiser training}
\label{subsec:pnp_denoiserprops}

\subsubsection{Ensuring convergence to a fixed point}

The denoiser ought to comply with \ref{assum:1} and \ref{assum:2} for Proposition~1 to hold.
As discussed in Sect.~\ref{subsec:pnp_operator} (see also Appendix~\ref{appendix:orthogonalcompl}), with the choice of $\tilde\genericop$ given in~\eqref{eq:genericop}, we have
\begin{equation}
    \orthcomplBA := \ker (\tilde\convToShear)^\perp ,
\label{eq:orthogonalcompl}
\end{equation}
that is, the subspace of zero-mean convergence maps.
In turn, \ref{assum:1} tells us that the denoiser must not introduce any bias, that is, the denoised version of a zero-mean map must also be zero-mean. This is a natural and expected property. It is consistent with the mass-sheet degeneracy property discussed in Sect.~\ref{subsec:background_mm} and can be enforced by explicitly appending a mean-centering layer at the output of the denoiser.

The non-expansiveness assumption \ref{assum:2} can be encouraged during training by introducing a regularization term that controls the spectral norm of the denoiser Jacobian \citep{Pesquet2021}. However, this strategy is computationally very expensive, and we simply chose to ignore this constraint in practice. A theoretical understanding of the training dynamics, including a potential implicit bias toward non-expansive denoisers, would be valuable. This avenue is beyond the scope of this work and is left for future research.

\subsubsection{Noise-level-aware denoiser}
\label{subsubsec:pnp_denoiserprops_noiseaware}

According to the discussions in Sects.~\ref{subsec:pnp_smallresidual}, \ref{subsec:pnp_operator} and \ref{subsec:pnp_stepsize}, the denoiser $\deepmodelTrained$ was trained by solving \eqref{eq:traindenoiser}, where $\noise_{0,\, i}$ are sampled from a zero-mean white Gaussian distribution with variance $\stepsize^2$. The bound \eqref{eq:stepsize3} shows that the noise level $\stepsize$ used for training depends on the noise covariance matrix $\tilde\covmatrNoise$ of the mass-mapping problem. At this stage, the training phase therefore remains dependent on $\tilde\covmatrNoise$. However, this dependence now reduces to a single scalar $\stepsize$, whereas the full (diagonal) covariance matrix was required for training DeepMass.

We aim to push this one step forward by making the training phase completely independent of the covariance matrix. One simple strategy would be to select a low value for $\stepsize$ to cover a broad range of survey conditions. However, as our experiments confirm, the choice of $\stepsize$ strongly affects the reconstruction quality.

We therefore adopted a more flexible approach in which the model was enabled to adapt to different noise conditions during inference. To this end, the network was trained on a dataset $(\convmap_{0,\, i},\, \convmap_i)_{i=1}^{\sizeTrainingset}$, where each input $\convmap_{0,\, i}$ was obtained by corrupting the ground truth $\convmap_i$ by noise $\noise_{0,\, i}$ drawn from $\MN(\bzero,\, \stdval_i^2)$, with $\stdval_i$ being randomly selected among a wide range of noise levels. 
For higher reconstruction accuracy, we used a noise-level-aware model, following an approach initially proposed by \citet{ZhangPlugandPlayImage2022}. In this context, the denoiser network takes the noise standard deviation $\sigma$ as an additional entry, that is, $\deepmodelTrained(\convmap',\, \stdval)$, where $\stdval > 0$ denotes the noise standard deviation at which the model is expected to operate.

During inference with PnPMass, according to~\eqref{eq:covmatrwhitenoise}, the intermediate outputs are propagated though the network with noise standard deviation $\stepsize$. Then, the fixed-point iteration introduced in~\eqref{eq:fixpointoperator} now reads
\begin{equation}
    \fixedpointiter{\tilde\shearmap}(\cdot,\, \stepsize) := \deepmodelTrained(\cdot,\, \stepsize) \circ \forwardop{\tilde\shearmap}(\cdot,\, \stepsize) ,
\label{eq:fixpointoperator_noiseaware}
\end{equation}
where now the forward and denoising steps both depend on the step size $\stepsize$. This is again reminiscent of the forward-backward splitting (FBS) algorithm, as discussed in Appendix~\ref{appendix:pnp_fb}.

\subsubsection{Training on multiple cosmologies}

Generating training data via simulation requires selecting a set of cosmological parameters, which implicitly shapes the prior distribution from which convergence map samples are drawn. If the chosen parameters diverge significantly from the true unknown cosmology that is to be inferred, the trained denoiser may struggle to generalize and thus fail to produce accurate reconstructions. One way to overcome this is to generate training data across a distribution of plausible cosmologies, enhancing the network robustness to distributional shifts.

\subsection{PnPMass on non-Gaussian residuals}
\label{subsec:pnp_respnp}

In the spirit of DeepMass \citep{Jeffrey2020}, the PnPMass algorithm can be enhanced by considering a priori knowledge about the underlying physics and data structure. To do this, we used a hypothesis made by \citet{Starck2021} for MCALens: convergence maps can be considered the superposition of a Gaussian and a non-Gaussian field (hereafter referred to as a residual field),
\begin{equation}
    \convmap = \convmapGauss + \convmapNongauss,
\end{equation}
where $\convmapGauss$, the Gaussian component, can be estimated from the noisy shear map $\shearmap$ using an iterative Wiener filtering method (see Sect.~\ref{subsubsec:background_relatedwork_classical}), assuming the power spectrum $\powerspectrum$ is known, coined $\iterativewiener{\powerspectrum}$,
\begin{equation}
    \convmapGauss := \iterativewiener{\powerspectrum}\!(\tilde\shearmap).
\end{equation}
Then, the inverse problem \eqref{eq:invprob_lifted} can be rewritten as
\begin{equation}
    \shearmapLiftedNongauss = \tilde\convToShear \convmapNongauss + \tilde\noise,
    \label{eq:invprob_res}
\end{equation}
where $\shearmapLiftedNongauss$ denotes the residual shear map,
\begin{equation}
    \shearmapLiftedNongauss := \tilde\shearmap - \tilde\convToShear\convmapGauss.
\end{equation}
Then, PnPMass can be used to estimate $\convmapNongauss$.

In order for this approach to remain consistent with the previous theoretical framework, the denoiser must be trained on a suitable dataset. Specifically, we considered the training set $(\convmap_{0,\, i},\, \convmap_i)_{i=1}^{\sizeTrainingset}$ introduced in Sect.~\ref{subsec:pnp_smallresidual}. Then, for each $i \in \oneto{\sizeTrainingset}$, we computed the Gaussian component $\convmapGauss_i$ from the noisy input $\convmap_{0,\, i}$ using a Wiener filter with a power spectrum $\powerspectrum$. Finally, we obtained the residual training set $\bigl(\convmapNongauss_{0,\, i},\, \convmapNongauss_i\bigr)_{i=1}^{\sizeTrainingset}$ by subtracting $\convmapGauss_i$ from the noisy input and the ground truth.

\section{Uncertainty quantification}
\label{sec:uq}

In Sect.~\ref{subsec:uq_mn} we introduced a fast UQ method for PnPMass. This approach is based on moment networks that were originally proposed by \citet{Jeffrey2020a} to quantify uncertainty in end-to-end deep-learning models such as DeepMass.

To the best of our knowledge, this is the first UQ method within the PnP framework that does not rely on ensembles or posterior sampling, both of which can be computationally demanding.
Ensemble-based approaches estimate the uncertainty by training and evaluating multiple networks and by capturing variability across the model parameters \citep{Shi2021, TerrisAIRIPlugandplayAlgorithm2025}. Posterior sampling methods, in contrast, are Bayesian UQ methods that seek to approximate the posterior distribution of the reconstructed images based on some implicit or explicit priors. For example, \citet{LaumontBayesianImaging2022} combined implicit PnP priors with Langevin diffusion to draw posterior samples. Alternatively, \citet{Ekmekci2021} proposed a Bayesian extension of the PnP framework by placing a prior over the denoiser parameters and propagating the uncertainty through the reconstruction pipeline.

A sampling-free method was introduced by \citet{PostelsSamplingFreeEpistemic2019} to approximate uncertainty in Monte Carlo dropout networks. This approach could in principle be adapted to the Bayesian PnP framework proposed by \citet{Ekmekci2021} as a way to reduce computational cost. This line of research targets uncertainties in the model design and parameters (commonly referred to as epistemic uncertainty) and is left for future work. In contrast, we assumed a perfect model design and training, and we instead quantified irreducible uncertainty arising from data noise, known as aleatoric uncertainty. We refer to \citet{AbdarReviewUncertainty2021} for a broader discussion of the distinction between epistemic and aleatoric uncertainty.

Next, in Sect.~\ref{subsec:uq_cqr}, we describe the CQR algorithm \citep{Romano2019}, which was adapted for mass mapping in a previous work \citep{LetermeDistributionfreeUncertainty2025}. This method serves as a postprocessing step that can be applied to any reconstruction method with an initial uncertainty estimate, providing per-pixel non-asymptotic marginal coverage guarantees without relying on prior assumptions about the data distribution.

\subsection{Variance estimation with moment networks}
\label{subsec:uq_mn}

In Sect.~\ref{subsubsec:uq_mn_deepmass} we adapt the main principles of posterior variance estimation using moment networks, as described by \citet{Jeffrey2020a}, to DeepMass specifically. Then, in Sect.~\ref{subsubsec:uq_mn_pnp}, we extend this approach to the PnP framework and provide theoretical insights supporting its use. Finally, in Sect.~\ref{subsubsec:uq_mn_errbars}, we explain how the error bars are obtained from the estimated posterior variance.

\subsubsection{Moment networks for DeepMass}
\label{subsubsec:uq_mn_deepmass}

Given a simulated dataset $\trainingsetMMLifted$, where the pairs $(\tilde\shearmap_i,\, \convmap_i)$ are independent and identically-distributed samples of the ground-truth convergence maps and observed noisy shear maps, DeepMass attempts to solve the mass-mapping problem by learning a deep neural network $\deepmassmodelParameterized$ on $\trainingsetMMLifted$ according to%
\begin{equation}\label{eq:traindeepmass}
\hat\paramDeepmassmodel \in \argmin_{\paramDeepmassmodel} \frac{1}{n} \sum_{i=1}^n \norm{\deepmassmodelParameterized(\tilde\shearmap_i) - \convmap_i}^2 .
\end{equation}
As $n \to \infty$, \eqref{eq:traindeepmass} is expected to approach a minimizer of the population risk, that is,
\begin{equation}\label{eq:traindeepmasspop}
\hat\paramDeepmassmodel \in \argmin_{\paramDeepmassmodel} \Expval\Bigl[ \norm{\deepmassmodelParameterized(\tilde\shearmapRand) - \convmapRand}^2\Bigr] ,
\end{equation}
where the expectation is to be understood with respect to the joint distribution of $(\convmapRand,\, \tilde\shearmapRand)$.

DeepMass adopts a Bayesian perspective and attempts to interpret \eqref{eq:traindeepmasspop} as the solution to
\begin{equation}
\min_{\deepmodel \, \text{is $\mu_{\cond{\tilde\shearmapRand}{\convmapRand}}$-measurable}} \Expval\Bigl[\norm{
            \deepmodel(\tilde\shearmapRand) - \convmapRand
        }^2\Bigr],
\label{eq:minmse}
\end{equation}
where $\mu_{\cond{\tilde\shearmapRand}{\convmapRand}}$ is the conditional distribution of $\tilde\shearmapRand$ given $\convmapRand$. The closed-form solution to \eqref{eq:minmse} is known as the posterior conditional mean estimator,
\begin{equation}
    \widehat\deepmodel(\tilde\shearmap) = \bigcondexpval{\convmapRand}{\tilde\shearmapRand=\tilde\shearmap}.
\label{eq:pcm}
\end{equation}
The key assumption in DeepMass is that the following approximation holds:
\begin{equation}
    \widehat\deepmodel(\tilde\shearmap) \approx \deepmassmodelTrained(\tilde\shearmap) .
\label{eq:approx_pmean}
\end{equation}
Intuitively, this can make sense if the architecture of the neural network $\deepmassmodelParameterized$, seen as a parameterized class of functions, is expressive enough, in such a way that the two optimization problems \eqref{eq:traindeepmasspop} and \eqref{eq:minmse} are close.

This reasoning can be extended to estimate higher-order moments, including the posterior conditional variance per pixel,
\begin{equation}
    \bigcondvar{\convmapRand}{\tilde\shearmapRand=\tilde\shearmap} := \Bigcondexpval{
        \Bigl(
            \convmapRand - \bigcondexpval{\convmapRand}{\tilde\shearmapRand=\tilde\shearmap}
        \Bigr)^2}{\tilde\shearmapRand=\tilde\shearmap} ,
    \label{eq:pvar}
\end{equation}
which is known to be the solution to
\begin{equation}
\min_{\deepmodelbisvar \, \text{is $\mu_{\cond{\tilde\shearmapRand}{\convmapRand}}$-measurable}} \Expval\left[\norm{
            \deepmodelbisvar(\tilde\shearmapRand) - \bigl(\convmapRand - \widehat\deepmodel(\tilde\shearmapRand\bigr)^2
            }^2\right].
\label{eq:minmsevar}
\end{equation}
To this end, we considered a new neural network model $\deepmodelbisParameterized$ trained on a dataset $(\tilde\shearmap_i,\, \varmap_i)_{i=1}^{\sizeTrainingset}$ by solving
\begin{equation}\label{eq:traindeepmassvar}
\hat\paramDeepmodelbis \in \argmin_{\paramDeepmodelbis} \frac{1}{n} \sum_{i=1}^n \norm{\deepmodelbisParameterized(\tilde\shearmap_i) - \varmap_i}^2 ,
\end{equation}
where we defined the target $\varmap_i := \left(\convmap_i - \deepmassmodelTrained(\tilde\shearmap_i)\right)^2$. Therefore, similarly to what we have argued for \eqref{eq:traindeepmasspop}, \eqref{eq:traindeepmassvar} is an empirical version of
\begin{equation}\label{eq:traindeepmassvarpop}
\hat\paramDeepmodelbisvar \in \argmin_{\paramDeepmodelbisvar} \Expval\left[ \norm{\deepmodelbisvarParameterized(\tilde\shearmapRand) - \left(
        \convmapRand - \deepmassmodelTrained(\tilde\shearmapRand)\right)^2}^2\right] ,
\end{equation}
Comparing \eqref{eq:minmsevar} and \eqref{eq:traindeepmassvarpop}, combining \eqref{eq:approx_pmean} and \eqref{eq:pvar}, and arguing as above, we can reasonably assume that
\begin{equation}
    \bigcondvar{\convmapRand}{\tilde\shearmapRand=\tilde\shearmap} \approx \deepmodelbisTrained(\tilde\shearmap) .
\end{equation}
The corresponding model is referred to as an order-2 moment network.

\subsubsection{Adaptation to PnPMass}
\label{subsubsec:uq_mn_pnp}

The moment network approach is not directly applicable to the PnP framework for several reasons. First, PnP, which is iterative, does not necessarily have a Bayesian interpretation, let alone a posterior conditional mean (see Sect.~\ref{subsec:pnp_operator} and Appendix~\ref{appendix:pnp_fb}). We also recall that we did not wish to train any network on a dataset that depends on the noise covariance matrix $\tilde\covmatrNoise$ for flexibility reasons. However, we describe below that within the setting presented in Sect.~\ref{sec:pnp}, the posterior conditional variance can be estimated with an order-2 moment network trained on zero-mean white Gaussian noise.

Let $\deepmodelTrained$ be the denoiser network that has been trained by solving \eqref{eq:traindenoiser} on a training dataset $(\convmap_{0,\, i},\, \convmap_i)_{i=1}^{\sizeTrainingset}$, where we have denoted $\convmap_{0,\, i} := \convmap_i + \noise_{0,\, i}$, with $\noise_{0,\, i}$ being sampled from $\MN(\bzero,\stepsize_i^2\BBI)$ conditionally on $\stepsize_i$ (see \eqref{eq:pnpnoise} and \eqref{eq:covmatrwhitenoise}).
The noise standard deviations $\stepsize_i$ are samples from the uniform distribution on an interval $\interval{\stepsize_{\min}}{\stepsize_{\max}}$ containing the bounds in \eqref{eq:stepsize2}. As discussed in Sect.~\ref{subsubsec:pnp_denoiserprops_noiseaware}, $\stepsize$ is also an input of the denoiser network. 

We now consider the random vector
\begin{equation}
    \convmapRand_0 := \convmapRand + \noiseRand_0,
\label{eq:convmaprand0}
\end{equation}
where $\noiseRand_0$ has been introduced in \eqref{eq:pnpnoise}. In the context in which we consider the step size $\stepsize$ to be an outcome of $\stepsizeRand \sim \MU(\interval{\stepsize_{\min}}{\stepsize_{\max}})$, $\noiseRand_0$ is now a mixture of zero-mean Gaussians with uniform mixing over $\stepsizeRand$. The noisy inputs $\convmap_{0,\, i}$ from the training set are then assumed to be drawn independently from the same distribution as $\convmapRand_0$.
With a similar reasoning as in Sect.~\ref{subsubsec:uq_mn_deepmass}, we can infer that given $\stepsize$ and a noisy input $\convmap_0$,%
\begin{equation}
    \deepmodelTrained(\convmap_0,\, \stepsize) \approx \Bigcondexpval{\convmapRand}{\convmapRand_0=\convmap_0,\, \stepsizeRand=\stepsize} .
\label{eq:approx_pmean_denoiser}
\end{equation}
For the posterior conditional variance, we consider a model $\deepmodelbisTrained$ trained on $(\convmap_{0,\, i},\, \varmap_i)_{i=1}^{\sizeTrainingset}$, with
\begin{equation}
    \varmap_i := \left(
        \convmap_i - \deepmodelTrained(\convmap_{0,\, i},\, \stepsize_i)
    \right)^2,
\label{eq:groundtruth_order2}
\end{equation}
and again following Sect.~\ref{subsubsec:uq_mn_deepmass}, we obtain
\begin{equation}
\deepmodelbisTrained(\convmap_0,\, \stepsize) = \Bigcondvar{\convmapRand}{\convmapRand_0=\convmap_0,\, \stepsizeRand=\stepsize}.
\label{eq:approx_pvar_denoiser}
\end{equation}

Now, we argue that after reaching a sufficient number of iterations, the outputs of $\deepmodelTrained$ and $\deepmodelbisTrained$ approximate the posterior conditional mean and variance, respectively, given a noisy observation $\tilde\shearmapRand = \tilde\shearmap$.
To this end, let $\convmapEstimate_0 := \forwardop{\tilde\shearmap}(\convmapEstimate,\ \stepsize)$
denote the output of the forward step applied to the fixed point, satisfying \eqref{eq:fixedpointcondition0}.
In Sect.~\ref{subsec:pnp_fixedpoint}  we informally established conditions on the training noise to favor small residuals $\residualEstimate$. Therefore, we consider \eqref{eq:fixedpointcondition} as a reasonable approximation,
\begin{equation}
    \convmapEstimate_0 \approx \convmap + \noise_0,\, \qwith \noise_0 := \stepsize\tilde\genericop\tilde\noise \in \mathR^{\imgsize^2}.
\label{eq:fixedpointcondition_reminder}
\end{equation}
For a number of iterations $\niter \geq 1$, denote $\fixedpointiter{\tilde\shearmap}^{\niter}(\cdot,\stepsize)$ the $\niter$-th composition of the PnPMass fixed-point operator $\fixedpointiter{\tilde\shearmap}^{\niter}(\cdot,\stepsize)$. Clearly, the PnPMass iteration \eqref{eq:fixedpointseries} also reads
\begin{equation}
    \convmap^{(\niter)} = \fixedpointiter{\tilde\shearmap}^{\niter}(\convmap^{(0)},\stepsize) .
\end{equation}
We denote $\pnpmassnoisy(\tilde\shearmap,\, \stepsize) := \forwardop{\tilde\shearmap}(
    \convmap^{(\niter)},\ \stepsize
)$ the output of the $(\niter + 1)$-th forward step.
For $\niter$ large enough, we have from Proposition~1 that $\convmap^{(\niter)} \approx \convmapEstimate$, which combined with \eqref{eq:fixedpointcondition0} and \eqref{eq:fixedpointcondition} yields
\begin{equation}\label{eq:fixedpointcondition_rand}
\pnpmassnoisy(\tilde\shearmap,\, \stepsize) \approx \forwardop{\tilde\shearmap}(\convmapEstimate,\ \stepsize) = \convmapEstimate_0 \approx \convmap + \noise_0 ,
\end{equation}
and therefore,
\begin{equation}
    \pnpmassnoisy(\tilde\shearmapRand,\, \stepsizeRand) \approx \convmapRand_0,
\end{equation}
where $\convmapRand_0$ has been introduced in \eqref{eq:convmaprand0}.
Plugging this into \eqref{eq:approx_pmean_denoiser} and \eqref{eq:approx_pvar_denoiser}, replacing $\convmap_0$ by $\convmapEstimate_0$, we obtain
\begin{equation}
\deepmodelTrained(\convmapEstimate_0,\, \stepsize) \approx \condexpval{\convmapRand}{\pnpmassnoisy(\tilde\shearmapRand,\, \stepsizeRand) = \pnpmassnoisy(\tilde\shearmap,\, \stepsize),\, \stepsizeRand = \stepsize}
\end{equation}
and
\begin{equation}
\deepmodelbisTrained(\convmapEstimate_0,\, \stepsize) \approx \condvar{\convmapRand}{\pnpmassnoisy(\tilde\shearmapRand,\, \stepsizeRand) = \pnpmassnoisy(\tilde\shearmap,\, \stepsize),\, \stepsizeRand = \stepsize}.
\end{equation}
It is then sufficient that $\pnpmassnoisy(\cdot,\, \stepsize)$ is injective to deduce that
\begin{equation}
    \deepmodelTrained(\convmapEstimate_0,\, \stepsize) \approx \condexpval{\convmapRand}{\tilde\shearmapRand=\tilde\shearmap}
\label{eq:pmean_pnpmass}
\end{equation}
and
\begin{equation}
    \deepmodelbisTrained(\convmapEstimate_0,\, \stepsize) \approx \condvar{\convmapRand}{\tilde\shearmapRand=\tilde\shearmap}.
\label{eq:pvar_pnpmass}
\end{equation}

The overall algorithm implementing the PnPMass iteration is summarized in Fig.~\ref{fig:pnpmass} and Algorithm~\ref{algo:pnpmass}.%

\input{inputs/flowchart}

\begin{algorithm}[t]
    \caption{PnPMass with uncertainty quantification}
    \begin{algorithmic}[1]
    \Require Noise covariance matrix $\covmatrNoise$;
    \Require Power spectrum $\powerspectrum$ (residual version only, Sect.~\ref{subsec:pnp_respnp});
    \Require Step-size $\stepsize$ and number of iterations $\niter$;
    \Require Denoiser $\deepmodelTrained$ trained on white Gaussian noise;
    \Require Order-2 moment network $\deepmodelbisTrained$.
    \Statex
    \State\textbf{input:} Noisy shear map $\tilde\shearmap$.
    \If{residual version}
        \State Compute Gaussian component: $\convmapGauss \gets \iterativewiener{\powerspectrum}(\tilde\shearmap)$;
        \State Update input (residual): $\tilde\shearmap \gets \tilde\shearmap - \tilde\convToShear\convmapGauss$.
    \EndIf
    \Statex
    \State \textbf{initialize:} $\convmap^{(0)} = \bzero$.
    \For{$i = 1, \ldots, \niter$}
        \State Forward step: $\convmap_0^{(i)} \gets \forwardop{\tilde\shearmap}\bigl(
            \convmap^{(i-1)},\, \stepsize
        \bigr)$ ;
        \State Backward step: $\convmap^{(i)} \gets \deepmodelTrained\bigl(
            \convmap_0^{(i)},\, \stepsize
        \bigr)$.
    \EndFor
    \State Set point estimate: $\convmapEstimate \gets \convmap^{(\niter)}$.
    \Statex
    \State Forward step: $\convmapEstimate_0 \gets \forwardop{\tilde\shearmap}(
        \convmapEstimate,\, \stepsize
    )$;
    \State Backward step (variance estimate): $\varmapEstimate \gets \deepmodelbisTrained(
        \convmapEstimate_0,\, \stepsize
    )$.
    \Statex
    \If{residual version}
        \State Update point estimate: $\convmapEstimate \gets \convmapEstimate + \convmapGauss$.
    \EndIf
    \State \Return $(\convmapEstimate,\, \varmapEstimate)$.
    \end{algorithmic}
    \label{algo:pnpmass}
\end{algorithm}

\subsubsection{Uncalibrated pre-estimation of per-pixel error bars}
\label{subsubsec:uq_mn_errbars}

Our goal is to provide per-pixel error bars with marginal and non-asymptotic coverage guarantees. {More precisely}, given a desired miscoverage rate $\alpha \in \zeroexcloneexcl$, we wish to estimate $\convmapEstimateLow$ and $\convmapEstimateHigh$ such that for {every} $k \in \smalloneto{\imgsize^2}$,
\begin{equation}
    \condproba{\convmapRand[k] \notin \interval{\convmapEstimateLow[k]}{\convmapEstimateHigh[k]}}{\tilde\shearmapRand = \tilde\shearmap} \leq \alpha.
\label{eq:miscoveragerate}
\end{equation}
To this end, we started with a rough estimate based on a Gaussian assumption. This was then calibrated (see Sect.~\ref{subsec:uq_cqr}). More precisely, we started with the following quantiles $\convmapEstimateLow$ and $\convmapEstimateHigh$ from a Gaussian distribution of mean $\convmapEstimate$ and variance $\varmapEstimate$, which are provided by Algorithm~\ref{algo:pnpmass}:
\begin{equation}
    \convmapEstimateHighLow[k]
        := \convmapEstimate[k] \pm \cdfg^{-1}\left(
        1 - \frac{\alpha}{2}
    \right) \, \stdmapEstimate[k],
\label{eq:errbars}
\end{equation}
where $\cdfg$ is the cumulative distribution function (CDF) of the normal distribution $\MN(0,\, 1)$, and $\stdmapEstimate = \sqrt{\varmapEstimate}$ is the standard deviation estimate. If the conditional distribution of $\convmapRand[k]$, given $\tilde\shearmap$, were $\MN(\convmapEstimate[k],\, \varmapEstimate[k])$, then \eqref{eq:miscoveragerate} would be satisfied by construction. This is not the case, however, and hence the need of a calibration step that we describe below.

\subsection{Calibration with conformal predictions}
\label{subsec:uq_cqr}

In Sect.~\ref{subsec:uq_mn} the pre-estimates of the per-pixel error bars were based on several assumptions entailing several approximations, as listed below.
\begin{enumerate}[label=$\bullet$]
    \item \eqref{eq:approx_pmean_denoiser} and \eqref{eq:approx_pvar_denoiser} state that the networks $\deepmodelTrained$ and $\deepmodelbisTrained$ are only approximations of the posterior mean and variance given $\convmapRand_0 = \convmap_0$, respectively. As discussed there, the approximation quality depends on factors such as the complexity of the neural networks used and the quality and diversity of the training data. The method therefore primarily captures aleatoric uncertainty, while epistemic uncertainty, associated with model expressiveness or limited knowledge, remains unaddressed.
    \item \eqref{eq:fixedpointcondition_rand} is yet another approximation of the point estimate which, as we argued there, relies on the assumption that the fixed point $\convmapEstimate$ of PnPMass lies close enough to $\convmap$. In practice, this is not necessarily true, and thus bias, that is, nonzero residuals in \eqref{eq:fixedpointcondition_rand}, may still be present. This is especially true near masked regions where the S/N is very low.
    \item The per-pixel error bars have been derived under a Gaussian assumption for the posterior distribution, making them only rough uncertainty estimates.
\end{enumerate}

For all these reasons, the miscoverage inequality as stated in \eqref{eq:miscoveragerate} is not guaranteed to hold. In this section, we present a calibration method directly inspired by CQR \citep{Romano2019}, for which we can obtain (marginal) coverage guarantees that are valid for finite samples (non-asymptotic) and do not depend on any assumption on the models or the data distribution (distribution free). The choice of this method over other calibration approaches such as risk-controlling prediction sets \citep{Angelopoulos2022c}, based on concentration inequalities, is motivated by the fact that it explicitly avoids overconservative error bands, making it suitable in a context in which the tightest possible error bars are required. We refer to \citet{LetermeDistributionfreeUncertainty2025} for a detailed discussion of this topic.

In Sect.~\ref{subsubsec:uq_cqr_general} we review the general principles of CQR for mass mapping. Then, we propose in Sect.~\ref{subsubsec:uq_cqr_minsize} a method for minimizing the size of the error bars without the need for a model selection on a validation set.

\subsubsection{General principles}
\label{subsubsec:uq_cqr_general}

Let $\calibrationsetMMLifted$ denote a calibration set of size $\sizeCalibrationset \in \mathN$. For each pair $(\tilde\shearmap_i,\, \convmap_i)$, we denote by $(\convmapEstimateLow_i,\, \convmapEstimateHigh_i)$ the lower and upper bounds obtained using \eqref{eq:errbars}, from which a vector of conformity scores, denoted by $\calibparamVec_i \in \mathR^{\imgsize^2}$, can be computed as follows. For each pixel $k \in \smalloneto{\imgsize^2}$,
\begin{equation}
    \calibparamVec_i[k] = \max\left(
        \convmapEstimateLow_i[k] - \convmap_i[k],\, \convmap_i[k] - \convmapEstimateHigh_i[k]
    \right).
\label{eq:conformity_score}
\end{equation}
Then, we introduce the calibration vector $\calibparamVec$, obtained by taking, in each pixel $k$, the $(1 - \alpha)(1 + 1/\sizeCalibrationset)$-th empirical quantile of $(\calibparamVec_i[k])_{i=\sizeTrainingset + 1}^{\sizeTrainingset + \sizeCalibrationset}$. The term $(1 + 1/\sizeCalibrationset)$, which accounts for finite-sample correction, imposes a lower bound to the target error rate: $\alpha \geq 1 / (\sizeCalibrationset + 1)$.
The calibrated lower and upper bounds are finally defined as%
\begin{equation}
    \convmapEstimateLowCQR := \convmapEstimateLow - \calibparamVec \qqand \convmapEstimateHighCQR := \convmapEstimateHigh + \calibparamVec.
\label{eq:calibrated_bounds}
\end{equation}

We denote by $\convmapEstimateLowCQRRand$ and $\convmapEstimateHighCQRRand$ the random vectors from which $\convmapEstimateLowCQR$ and $\convmapEstimateHighCQR$ are drawn. Their random nature comes from that of $\tilde\shearmapRand$, from which the noisy observation is drawn, as well as from $\calibrationsetMMLiftedRand$, from which the calibration set is drawn. If $\calibrationsetMMLiftedRand \cup \{(\tilde\shearmapRand,\, \convmapRand)\}$ are exchangeable (\eg, independent and identically distributed), and the conformity scores are almost surely distinct, then we have the following (marginal) miscoverage guarantee at pixel $k$:
\begin{equation}
    \alpha - \frac1{\sizeCalibrationset + 1} \leq \Proba\left\{
        \convmapRand[k] \notin \interval{\convmapEstimateLowCQRRand[k]}{\convmapEstimateHighCQRRand[k]}
    \right\} \leq \alpha.
\label{eq:miscoveragerate_cqr}
\end{equation}
By averaging over pixels, we deduce that the expected proportion of pixels whose mass estimates fall outside the corresponding predicted error interval also verifies the bounds in \eqref{eq:miscoveragerate_cqr}.

The bound in \eqref{eq:miscoveragerate_cqr} presents a major difference compared to \eqref{eq:miscoveragerate}. Whereas the latter is conditional given $\tilde\shearmap$, the former is marginalized over all possible outcomes of $\tilde\shearmapRand$. Consequently, some inputs may more likely lead to errors than others. Conformal prediction with conditional guarantees is a very challenging problem and has only been solved in specific cases \citep{GibbsConformalPrediction2025}. We leave this for future work. In addition, the coverage guarantee provided in \eqref{eq:miscoveragerate_cqr} is also marginalized over the distribution of all calibration sets. Therefore, accidentally picking an out-of-distribution calibration set may lead to a higher error rate than expected. This effect can be mitigated by increasing the size of the calibration set.

The calibration procedure is not independent of the covariance matrix $\covmatrNoise$, unlike the training phase. However, we point out that the former is cheaper by several orders of magnitude than the latter, computationally speaking, as detailed in Sect.~\ref{subsec:results_times}.%

\subsubsection{Minimizing the size of the error bars}
\label{subsubsec:uq_cqr_minsize}

We can easily show that the miscoverage guarantee from \eqref{eq:miscoveragerate_cqr} still holds if the pre-calibration bounds $\convmapEstimateHighLow$, introduced in \eqref{eq:errbars}, are replaced by
\begin{equation}
    \convmapEstimateHighLow_{\multfactbounds}[k]
        := \convmapEstimate[k] \pm \multfactbounds \, \cdfg^{-1}\left(
        1 - \frac{\alpha}{2}
    \right) \, \stdmapEstimate[k]
\label{eq:errbars_0}
\end{equation}
for any multiplicative constant $\multfactbounds > 0$. The calibration vector is then recomputed for each $\multfactbounds$, and is denoted by $\calibparamVec_{\multfactbounds}$. The mean error bar size can then be minimized over $\multfactbounds$:%
\begin{equation}
    \multfactbounds^\ast \in \argmin_{\multfactbounds > 0} \multfactbounds \errbarsize + \calibparam_{\multfactbounds},
\end{equation}
where $\errbarsize > 0$ denotes the mean half-size of the precalibration error bars (averaged over all test images and all pixels), and $\calibparam_{\multfactbounds} \in \mathR$ denotes the mean value of the calibration vector $\calibparamVec_{\multfactbounds}$, which is a non-increasing function of $\multfactbounds$. 
This procedure is computationally inexpensive because $\convmapEstimate_i$ and $\stdmapEstimate_i$ are computed only once. For each candidate $\multfactbounds$, only the conformity scores $\calibparamVec_{\multfactbounds,\, i}$ and their corresponding quantiles must be recomputed in order to obtain $\calibparamVec_{\multfactbounds}$.

\section{Experimental settings}
\label{sec:experiments}

\subsection{Datasets}
\label{subsec:experiments_datasets}

In order to generate the training, validation, calibration, and test sets, the ground-truth convergence maps $\convmap_i$ were obtained using the $\kappa$TNG dataset of mock convergence maps, as described below. The mass maps were generated from the full redshift distribution, following the same procedure as \citet{LetermeDistributionfreeUncertainty2025}.%

For the training and validation sets, the noisy inputs $\convmap_{0,\, i}$ were obtained from $\convmap_i$ by adding a white Gaussian noise with a standard deviation uniformly distributed between $0$ and $0.2$. The choice for this particular interval is explained in Sect.~\ref{subsubsec:experiments_impl_pnpmass}. At each training epoch, a new noise realization was drawn, with a newly selected noise level.
For the calibration and test sets, the noisy shear maps $\shearmap_i$ were computed from $\convmap_i$ using~\eqref{eq:invprob}. The covariance matrix $\covmatrNoise$ was obtained by binning the weak-lensing shear catalog from \citet{Schrabback2010} (bright catalog only) at the $\kappa$TNG resolution ($0.29$ arcmin per pixel), and then applying~\eqref{eq:covmatrnoise}. The intrinsic standard deviation $\stdval_{\ellip}$ was estimated from the measured ellipticities and set to $0.39$.

This catalog is based on the Cosmic Evolution Survey (COSMOS, \citealp{Scoville2007}) obtained with the Hubble Space Telescope, as well as photometric redshift measurements from \citet{Mobasher2007}. It contains an average of $32$ galaxies per square arcminute over a wide range of redshifts, which is consistent with what is expected from forthcoming surveys such as Euclid. This number excludes the galaxies with redshifts above $2.6$ for the sake of consistency with the $\kappa$TNG dataset. It also ignores the masked pixels, without any observed galaxy.
An example of a mock convergence map $\convmap_i$ from $\kappa$TNG, and the corresponding noisy shear map $\shearmap_i$, is provided in Appendix~\ref{appendix:visual_representations}, Fig.~\ref{fig:convshear}.%

The $\kappa$TNG cosmological hydrodynamic simulations \citep{Osato2021} include realizations of $5 \times 5 \deg^2$ convergence maps for various source redshifts up to $z = 2.6$, at a $0.29$ arcmin per pixel resolution, assuming a flat $\Lambda$CDM universe. Only one set of cosmological parameters was used to run the simulations: $H_0 = 67.74 \km\cdot\s^{-1}\cdot\Mpc^{-1}$, $\Omega_{\baryon} = 0.0486$, $\Omega_{\matter} = 0.3089$, $n_{\spectralindex} = 0.9667$, and $\sigma_8 = 0.8159$. For training and validation, we used $98$ and $2$ nearly independent realizations generated from a single lensing potential, respectively, that we randomly rotated and cropped to $384 \times 384$ pixels until we reached $70\,560$ images for training and $1\,440$ for validation. For calibration and testing, we used $57$ and $43$ nearly independent realizations generated from another lensing potential, respectively. The test set was then obtained by extracting $512$ non-overlapping\footnote{Accounting for the COSMOS boundaries.} crops of size $384 \times 384$ pixels, whereas the calibration set was obtained by following the augmentation procedure described above, with a total of $1\,024$ images.

\subsection{Other implementation details}

\subsubsection{Parameters for PnPMass}
\label{subsubsec:experiments_impl_pnpmass}

With the specific value of $\covmatrNoise$ introduced in Sect.~\ref{subsec:experiments_datasets}, the step size $\stepsize$ must be chosen in the interval $\intervalexcllr{0}{0.176}$ to comply with~\eqref{eq:stepsize2}. To derive these bounds, $\lambda_{\max}$ and $\lambda_{\min}$ were computed using the power iteration method with $100$ iterations (see Appendix~\ref{appendix:powitlambdamin}). To assess the sensitivity of PnPMass to the step size, we tested several values of $\stepsize$, set to $50\%$, $75\%$, and $100\%$ of $\stepsizeMax := (1+\rho)/\lambda_{\max}$, where $\rho$ was taken arbitrarily close to $1$.

According to \eqref{eq:covmatrwhitenoise}, the denoiser must be able to operate at a noise level equal to the step size $\stepsize$. For flexibility reasons, we chose to train the model on a realistic range of noise levels including the above interval, following the method described in Sect.~\ref{subsubsec:pnp_denoiserprops_noiseaware}. As detailed in Sect.~\ref{subsec:experiments_datasets}, we used a uniform distribution between $0$ and $0.2$.

For the PnPMass variant on non-Gaussian residuals described in Sect.~\ref{subsec:pnp_respnp}, the power spectrum $\powerspectrum$ was estimated on $2\,048$ ground-truth convergence maps from the training set. The Gaussian component $\convmapGauss$ was then computed using an iterative Wiener filtering method with $12$ iterations and a step-size of $1.61 \times 10^{-2}$.

\subsubsection{Uncertainty quantification parameters}
\label{subsubsec:experiments_impl_uq}

We selected a target error rate $\alpha$ such that $\cdfg^{-1}\!\left(
    1 - \alpha/2
\right) = 2$, see \eqref{eq:errbars}. This setting, referred to as $2\sigma$ confidence, corresponds to $\alpha \approx 4.55\%$. We note that higher confidence levels require larger calibration sets due to the finite-sample correction: the CQR algorithm selects the $(1 - \alpha)(1 + 1/\sizeCalibrationset)$-th empirical quantile of the conformity scores, which by definition must remain below $100\%$. Consequently, a $2\sigma$ confidence requires at least $\sizeCalibrationset = 21$ calibration samples, versus $\sizeCalibrationset = 370$ at $3\sigma$ and $\sizeCalibrationset = 15\,787$ at $4\sigma$. In our experiments, we chose $1\,024$ calibration examples, which is far above the minimum requirement of $21$ samples. According to \eqref{eq:miscoveragerate_cqr}, the probability of miscoverage after calibration with this setting ranges between $4.45\%$ and $4.55\%$.

\subsubsection{Denoiser training and PnPMass implementation}
\label{subsubsec:experiments_impl_training}

Our models are based on SUNet \citep{Fan2022}, a Swin transformer-based architecture with $7.2$ millions of trainable parameters, which we adapted to incorporate noise-level awareness. Following an approach similar to that used by \citep{ZhangPlugandPlayImage2022} for DRUNet, we introduced an additional input channel that encodes the noise standard deviation. Since our models were trained with stationary noise, this channel was filled with the constant value $\stdval_i$ for each input $\convmap_{0,\, i}$.

We trained four models in total: two order-1 models $\deepmodelTrained$ for the point estimate, and two order-2 models $\deepmodelbisTrained$ for the variance (see Algorithm~\ref{algo:pnpmass}). In each case, we trained one model for standard PnPMass and another one for the variant on non-Gaussian residuals (see Sect.~\ref{subsec:pnp_respnp}), using Adam as the optimizer and the squared $\ell^2$ loss. Training was run for $100$ epochs, with a batch size of $16$. The learning rate was initialized at $10^{-3}$ and reduced by a factor of $10$ four times during training.
For the PnPMass variant on non-Gaussian residuals, the Gaussian components $\convmapGauss_i$ were computed from $\convmap_{0,\, i}$ using one-step Wiener filtering, which is feasible because the noise is stationary, with the power spectrum $\powerspectrum$ introduced in Sect.~\ref{subsubsec:experiments_impl_pnpmass}.

In $\deepmodelbisTrained$, the final mean-centering layer was replaced with a rectified linear unit (ReLU) activation layer to enforce nonnegative outputs. To avoid vanishing gradients during training, however, this layer was only applied at inference time. The training set was built from the same dataset as we used for training $\deepmodelTrained$, with the ground truth derived from \eqref{eq:groundtruth_order2}.
Our PnPMass implementation is based on the DeepInverse library \citep{TachellaDeepInverse2025},\footnote{\url{https://deepinv.github.io/}} which is built on PyTorch.

\section{Results and discussion}
\label{sec:results}

\begin{figure}
	\centering
    \includegraphics[width=\columnwidth]{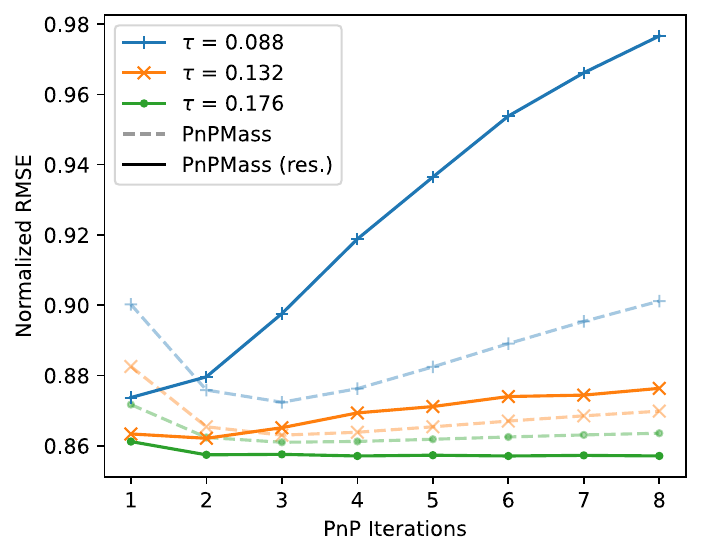}
    \vspace{-15pt}
    \caption{Evolution of the mean RMSE along PnPMass iterations, computed on the $\kappa$TNG test set ($512$ images) for various step sizes $\stepsize$. The results have been normalized \wrt to the $\ell^2$-norm of each image.}
\label{fig:pnpmass_iterations}
\end{figure}

\subsection{Reconstruction accuracy}

\subsubsection{PnPMass iterations}

We ran PnPMass for eight iterations and plotted the evolution of the normalized root-mean-square error (RMSE) over the test set, restricted to active pixels, for various step sizes $\stepsize$ for the standard PnPMass and its variant on non-Gaussian residuals. The results are shown in Fig.~\ref{fig:pnpmass_iterations}.
Throughout the discussion, we refer to this metric as the reconstruction accuracy.

The RMSE decreases over the first iterations before either stabilizing or increasing again, depending on the step size. Moreover, the reconstruction accuracy of PnPMass and its variant on non-Gaussian residuals increases as $\stepsize$ increases in the devised range. PnPMass on non-Gaussian residuals is notably more sensitive to the choice of $\stepsize$ than standard PnPMass. However, since the former incorporates prior knowledge of the Gaussian structure of the background mass distribution, it achieves the best reconstruction accuracy compared to the standard PnPMass at $\stepsize = \stepsizeMax$. Based on bootstrapping, the expected improvement after eight iterations lies between $6.04 \times 10^{-3}$ and $6.88 \times 10^{-3}$ with $95\%$ confidence.

From now on, we focus on the results obtained after eight iterations, at $\stepsize = \stepsizeMax$. This choice was motivated by empirical observations because the reconstruction accuracy did not improve further beyond a few iterations.

\subsubsection{Benchmark against other approaches}

We compared the performance of the Wiener, MCALens, DeepMass, and PnPMass algorithms in terms of reconstruction accuracy (RMSE on active pixels). The results are displayed in Table~\ref{table:rmse}.

\begin{table}
    \caption{Reconstruction accuracy.}             
    \label{table:rmse}
    \centering
    \setlength{\tabcolsep}{4pt}
    \vspace{-5pt}
    \small
    \begin{centering}
    \renewcommand{\arraystretch}{1.1}
    \begin{tabular}{ r | c }
        \hline\hline
        \multicolumn{1}{c|}{Method} & \multicolumn{1}{c}{Normalized RMSE ($\times 10^{-1}$)} \\
        \hline
        Wiener$^\ast$ & $8.86 \pm 0.34$ \\
        MCALens$^{\dagger}$ & $8.74 \pm 0.44$ \\
        DeepMass$^\ddagger$ & $8.53 \pm 0.48$ \\
        \hline
        PnPMass  & $8.64 \pm 0.50$ \\
        PnPMass (res.) & $8.58 \pm 0.49$ \\
        \hline
    \end{tabular}
    \end{centering}
    \begin{minipage}{\linewidth}
        \vspace{5pt}
        \footnotesize
        \textbf{Notes.} 
        For each test example, the RMSE was computed over active pixels and normalized with respect to the $\ell^2$-norm of the ground-truth image. Then, the mean and standard deviation were evaluated across the test set ($512$ images).
        $^\ast$Wiener was run for $12$ iterations with a power spectrum estimated from $2\,048$ ground-truth $\kappa$-maps in the training set.
        $^\dagger$MCALens was run for $16$ iterations (early stopping), with a $4.5\sigma$ detection threshold and a step size of $0.088$ for the non-Gaussian component.
        $^\ddagger$DeepMass was trained for $20$ epochs on a UNet architecture with $280$k parameters, following \citet{Jeffrey2020}.
    \end{minipage}
\end{table}

We found that our methods outperform the classical approaches (Wiener and MCALens), but fall slightly behind the reconstruction accuracy achieved by DeepMass. A likely explanation is that DeepMass has been fine-tuned for a specific mask $\mask$ and noise covariance $\covmatrNoise$, whereas PnPMass is applicable across a wide range of configurations. However, as discussed in Sect.~\ref{subsubsec:background_relatedwork_dl}, this level of performance comes at the cost of retraining for each new observation, which limits its practicality. We therefore argue that PnPMass offers a very good trade-off between flexibility and reconstruction accuracy.

\subsection{Uncertainty quantification results}

For each mass-mapping method introduced above and each example in the $\kappa$TNG test set, we computed the lower and upper bounds before and after calibration. Then, we measured the empirical miscoverage rate, that is, the percentage of active pixels falling outside the predicted bounds.
For Wiener and MCALens, the error bars were simply initialized to $0$ before calibration. Regarding DeepMass and PnPMass, the error bars before calibration were obtained using appropriate order-2 moment networks, as described in Sects.~\ref{subsubsec:uq_mn_deepmass} and \ref{subsubsec:uq_mn_pnp}, respectively. We then applied calibration while minimizing the size of the error bars, as described in Sect.~\ref{subsubsec:uq_cqr_minsize}.
The corresponding results, computed on the test set, are depicted in Figs.~\ref{fig:summary} and \ref{fig:miscoveragerate} (Appendix~\ref{appendix:plots}).

\begin{figure}
    \centering
    \includegraphics[trim=0 0 0 7pt, clip, width=\columnwidth]{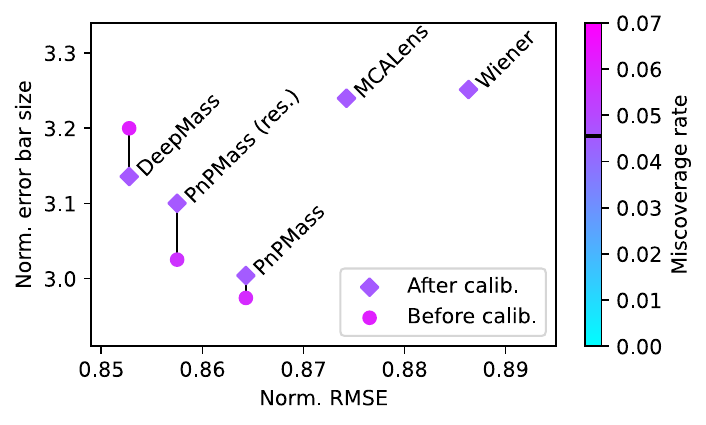}
    \vspace{-15pt}
    \caption{Mean length of prediction intervals plotted against reconstruction accuracy before and after calibration. Both metrics have been normalized \wrt to the $\ell^2$-norm of each image. The marker colors indicate empirical miscoverage rates, the target $\alpha$ being marked on the color bar. The metrics are computed over the $\kappa$TNG test set ($512$ examples). All mass-mapping algorithms were run with the same parameters as in Table~\ref{table:rmse}. For Wiener and MCALens, only the post-calibration miscoverage rate is displayed, since their error bars were initialized to $0$.}
    \label{fig:summary}
\end{figure}

\begin{figure*}
    \centering
    \small
    \vspace{-8pt}
    \begin{tikzpicture}

        \node (mmgan) {\includegraphics[height=0.22\textwidth, trim={0 0 50pt 0}, clip]{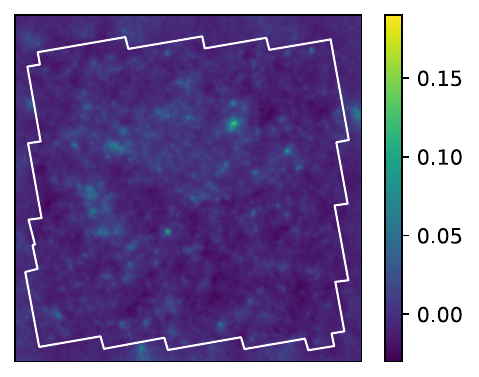}};
        \node[anchor=west, xshift=-5pt] (deepmass) at (mmgan.east) {\includegraphics[height=0.22\textwidth, trim={0 0 50pt 0}, clip]{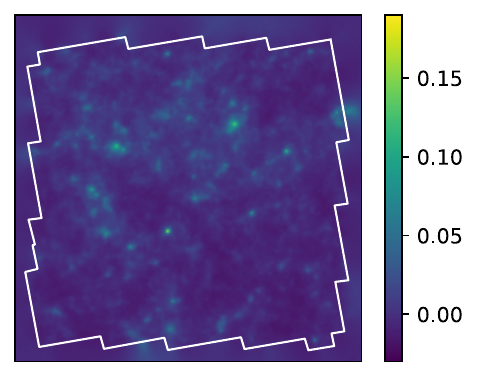}};
        \node[anchor=west, xshift=-5pt] (pnpmass) at (deepmass.east) {\includegraphics[height=0.22\textwidth, trim={0 0 50pt 0}, clip]{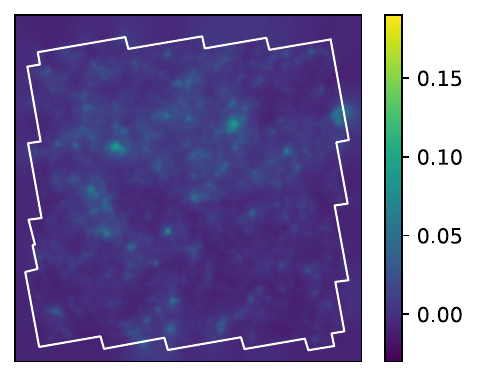}};
        \node[anchor=west, xshift=-5pt] (respnpmass) at (pnpmass.east) {\includegraphics[height=0.22\textwidth]{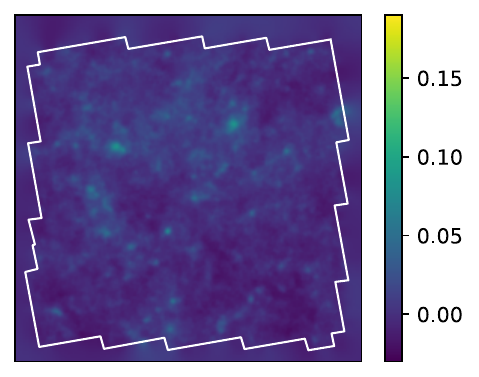}};

        \node (mmgan caption) at (mmgan.north) {MMGAN$^\ast$};
        \node (deepmass caption) at (deepmass.north) {DeepMass$^{\ast\ast}$};
        \node (pnpmass caption) at (pnpmass.north) {\textbf{PnPMass}};
        \node[xshift=-12pt] (respnpmass caption) at (respnpmass.north) {\textbf{PnPMass (res.)}};

        \end{tikzpicture}
    \vspace{-8pt}
    \caption{Reconstruction of the COSMOS field with several deep-learning-based mass-mapping methods. For consistency with MMGAN (left), we concatenated the bright and faint catalogs before binning to the $\kappa$TNG resolution. $^\ast$Directly provided by \citet{WhitneyGenerativeModellingMassmapping2025}. $^{\ast\ast}$DeepMass was retrained to take the noise level and mask from the bright + faint catalogs into account.}
    \label{fig:visual_representations_cosmos}
\end{figure*}

\subsubsection{Accuracy versus error bar size}
\label{subsubsec:results_uq_errbars}

After calibration, all methods achieved the same miscoverage rate, as guaranteed by \eqref{eq:miscoveragerate_cqr}. However, the mean error bar size differed in the methods, as discussed below.

First, let us focus on the comparison of our approach with Wiener and MCALens. PnPMass in the standard and residual versions produces smaller error bars than the two model-driven methods (see the y-axis in Fig.~\ref{fig:summary}).
We hypothesize that the combination of higher reconstruction accuracy and adaptive pre-calibrated error bars obtained with order-2 moment networks helps to distinguish between smooth regions and peak areas. This might in turn result in a more refined uncertainty quantification by correctly assigning higher uncertainties near peaks.
In contrast, Wiener and MCALens produce error bars that are independent of the input images: they are initialized at $\bzero$ and subsequently calibrated using a parameter that only depends on the pixel location.

More surprisingly, PnPMass also produces smaller error bars than DeepMass, but they are slightly less accurate. This result is statistically significant because it is observed for all $512$ test images. To explain this phenomenon, we propose the following conjecture: DeepMass detects more peaks than PnPMass, which improves reconstruction accuracy, but is also more prone to hallucinations (\ie, false detections), which negatively affects the confidence level. In other words, the optimal precision-recall balance for peak detection depends on the metric that is to be prioritized. Consequently, a trade-off arises between reconstruction accuracy and error bar size. The same reasoning applies when we compare the standard and residual versions of PnPMass.

It is worth noting that error bars might in principle have been initialized at $\bzero$ for all mass-mapping methods, avoiding the need to train an order-2 moment network for DeepMass and PnPMass. However, this would ultimately result in larger calibrated error bars to compensate for the poorer initialization. A plot similar to Fig.~\ref{fig:summary}, with zero-initialization for all methods, is shown in Appendix~\ref{appendix:plots}, Fig.~\ref{fig:summary_zeroinit}.

\subsubsection{Effect of the calibration on the confidence intervals}

For all methods based on moment networks, the pre-calibration error bars yield a miscoverage rate slightly above the target $\alpha$. Interestingly, calibration reduced the error bar size for DeepMass, which may seem counterintuitive. This effect arises from the optimization strategy described in Sect.~\ref{subsubsec:uq_cqr_minsize}. Our interpretation is that while error bars were underestimated for most pixels, some regions, particularly around peaks, were overly conservative, leading to an overall overestimation that calibration was able to correct.

\subsection{Mass-mapping visualization}

We applied the two versions of PnPMass to the COSMOS shear field, which we binned at the $\kappa$TNG resolution. A visual representation of the two reconstructions is shown in Fig.~\ref{fig:visual_representations_cosmos}, together with DeepMass and MMGAN reconstructions.\footnote{We did not reproduce MMGAN ourselves; the reconstruction was downloaded from the Zenodo repository provided by the authors.} Additional visuals are provided in Appendix~\ref{appendix:visual_representations}, Fig.~\ref{fig:visual_representations_cosmos_bis}, including the position of known x-ray clusters, as well as uncertainty maps (standard deviation per pixel).

A visual example of a reconstructed convergence map from a $\kappa$TNG simulation, along with its predicted standard deviation and bounds, is shown in Appendix~\ref{appendix:visual_representations}, Fig.~\ref{fig:visual_representations}.
While the post-calibration miscoverage rate remains constant on average, there are situations in which a given method may systematically fail to estimate uncertainty correctly. This is particularly noticeable for the Wiener solution around peak values: the blue spots in the bottom image indicate miscoverage. In contrast, our methods better account for peak values because the order-2 moment networks tends to predict a higher uncertainty in these regions. However, conformal prediction does not guarantee control of the miscoverage rate in specific situations such as peak values. One possible improvement would be to employ conditional conformal prediction \citep{GibbsConformalPrediction2025}, which we leave for future work.

\subsection{Computation times}
\label{subsec:results_times}

\begin{table}
    \caption{Computation times for $\kappa$TNG.}             
    \label{table:time}
    \centering
    \setlength{\tabcolsep}{4pt}
    \vspace{-5pt}
    \small
    \begin{centering}
    \renewcommand{\arraystretch}{1.1}
    \begin{tabular}{ r | r r | r r }
        \hline\hline
        & \multicolumn{2}{c|}{Training}  & & \\
        \multicolumn{1}{c|}{Method} & \multicolumn{1}{c}{Order~1} & \multicolumn{1}{c|}{Order~2} & \multicolumn{1}{c}{Calibration} & \multicolumn{1}{c}{Inference} \\
        \hline
        Wiener & -- & -- & $9''$ & $6''$ \\
        MCALens & -- & -- & $44''$ & $24''$ \\
        DeepMass$^\ddagger$ & $07$:$27'$ & $46$:$33'$ & $18''$ & $7''$ \\
        \hline
        PnPMass & $31$:$41'$ & $47$:$20'$ & $1'42''$ & $50''$ \\
        PnPMass (res.) & $33$:$43'$ & $43$:$44'$ & $1'48''$ & $54''$ \\
        \hline
    \end{tabular}
    \end{centering}
    \begin{minipage}{\linewidth}
        \vspace{5pt}
        \footnotesize
        \textbf{Notes.}
        Training times were estimated for the first- and second-order moment networks on a single NVIDIA Quadro RTX 6000 GPU. Calibration and inference were performed on a single NVIDIA GeForce GTX 1080 Ti GPU, using $1\,024$ and $512$ examples, respectively. Inference includes the computation of a point estimate, the initialization of error bars, and their calibration using the parameters computed at the previous stage.
        The training times are expressed in hours, not minutes.
        $^\ddagger$The order-2 moment network for DeepMass was trained for $100$ epochs, as the validation curve indicated that training was not yet complete after $20$ epochs, in contrast to the order-1 network.
    \end{minipage}
\end{table}

The computation times for training, calibration, and inference are reported in Table~\ref{table:time}.
While calibration and inference with PnPMass are slower than with DeepMass due to its iterative nature, these differences are negligible compared to training times, which are longer by several orders of magnitude than the inference times. This highlights one of the main advantages of our approach: unlike DeepMass and MMGAN, training is required only once.

\section{Conclusion}
\label{sec:concl}

We introduced PnPMass, a PnP-based mass-mapping method that combines the strengths of DeepMass and DeepPosterior, namely, flexibility (no need to retrain for each new observation) and fast inference, while maintaining competitive reconstruction accuracy. We further proposed a fast UQ strategy for PnPMass based on moment networks, followed by a conformal prediction to mitigate several sources of bias, including model uncertainty. Our results show that PnPMass yields the smallest calibrated error bars of all evaluated mass-mapping methods. Together, these contributions make our approach well suited to handle the large data volumes expected from upcoming stage-IV surveys such as Euclid and Rubin.

Several directions remain open to fully integrate our framework into weak-lensing pipelines.
First, further investigations are needed to assess the robustness of our approach across different cosmologies. Our experiments focused on $\kappa$TNG, a dataset simulated from a single set of cosmological parameters. A more realistic setting would involve datasets spanning a range of cosmological models.
Second, although conformal prediction provides coverage guarantees on average, it may fail to control miscoverage around peak structures, which are particularly important for cosmological parameter inference \citep{TersenovImpactWeaklensingMassmapping2025}. A promising avenue to address this limitation is to employ conditional conformal prediction \citep{GibbsConformalPrediction2025}, rather than the standard version used here.
Finally, it will be crucial to assess how uncertainties estimated at the mass-mapping stage propagate to cosmological parameter inference and affect the resulting FoM. In particular, this analysis could clarify whether the residual variant should be preferred over the standard version of PnPMass, as optimal cosmological parameter estimation may depend on a trade-off between reconstruction accuracy and size of the confidence intervals.

\section*{Data availability}

To ensure reproducibility, all software, scripts, and notebooks used in this study are available on GitHub, at \url{https://github.com/hubert-leterme/weaklensing_uq.git}.

\begin{acknowledgements}
    This work was funded by the TITAN ERA Chair project (contract no.\@ 101086741) within the Horizon Europe Framework Program of the European Commission, and the French Agence Nationale de la Recherche (ANR-22-CE31-0014-01 TOSCA and ANR-18-CE31-0009 SPHERES).
\end{acknowledgements}

%
\bibliographystyle{bibtex/aa} 
\bibliography{bibtex/refs} 
%

\begin{appendix}

\section{Proof of Proposition~1}
\label{appendix:proof_fixedpoint}

We start with consider this preparatory lemma.
\begin{lemma}
    \label{lemma:contraction}
    Suppose that \ref{assum:3} and \ref{assum:3'} hold. Then, for $\rho \in \intervalexclr{\rho^\star}{1}$,
    the forward operator $\forwardop{\tilde\shearmap}(\cdot,\, \stepsize)$ is $\rho$-contractive on $\orthcomplBA$ if, and only if, the step-size $\stepsize$ satisfies \eqref{eq:stepsize}.
    The smallest contraction factor $\rho = \rho^\star$ is obtained for $\tau = \tau^\star$, with%
    \begin{equation}
        \tau^\star := \frac{2}{\lambda_{\max}+\lambda_{\min}}.
    \label{eq:stepsize_lowerbound_rho}
    \end{equation}
\end{lemma}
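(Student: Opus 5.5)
The forward operator is affine, so its contraction properties on $\orthcomplBA$ are those of its linear part, which is symmetric on that subspace; the proof reduces to locating the spectrum of $\BBI - \stepsize\tilde\genericop\tilde\convToShear$ restricted to $\orthcomplBA$. For $\convmap_1, \convmap_2 \in \orthcomplBA$ we have
\begin{equation*}
\forwardop{\tilde\shearmap}(\convmap_1,\stepsize) - \forwardop{\tilde\shearmap}(\convmap_2,\stepsize) = \bigl(\BBI - \stepsize\tilde\genericop\tilde\convToShear\bigr)(\convmap_1 - \convmap_2).
\end{equation*}
Two facts follow from the assumptions. By \ref{assum:3'}, the operator $\forwardop{\tilde\shearmap}(\cdot,\stepsize)$ maps $\orthcomplBA$ into itself, since $\convmap' + \stepsize\tilde\genericop(\cdot) \in \orthcomplBA$. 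Hence $\rho$-contractivity on $\orthcomplBA$ is equivalent to $\bignorm{(\BBI - \stepsize\tilde\genericop\tilde\convToShear)|_{\orthcomplBA}} \le \rho$.

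Next I use \ref{assum:3}: $M := \tilde\genericop\tilde\convToShear$ is symmetric positive semidefinite. Its image is contained in $\orthcomplBA$ by \ref{assum:3'}, and since $M$ is symmetric, $\ker M = (\image M)^\perp \supset \orthcomplBA^\perp$. I need equality here, i.e. $\ker M = \ker\tilde\convToShear$, so that $M$ restricted to $\orthcomplBA$ has no zero eigenvalue. For this I argue as follows. Clearly $\ker\tilde\convToShear \subset \ker M$. Conversely, if $\convmap \in \orthcomplBA \cap \ker M$, then $\convmap \perp \image M$ and $\convmap \in \orthcomplBA$. I expect this to be the delicate point: a literal reading of \ref{assum:3}--\ref{assum:3'} gives only $\image M \subset \orthcomplBA$. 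I would handle it by interpreting $\lambda_{\min}$ as the smallest eigenvalue of $M$ on $\orthcomplBA$, assumed nonzero, which is what "minimum nonzero eigenvalue" is meant to encode. For the concrete choice \eqref{eq:genericop} it holds because $\tilde\covmatrNoise^{-1/2}$ is positive definite, giving $\langle \convmap, M\convmap\rangle > 0$ for $\convmap \in \orthcomplBA\setminus\{\bzero\}$. With this, the spectral theorem gives an orthonormal eigenbasis of $\orthcomplBA$ for $M$, with eigenvalues in $[\lambda_{\min},\lambda_{\max}]$, both extremes attained.

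The norm of the symmetric operator $\BBI - \stepsize M$ on $\orthcomplBA$ is therefore
\begin{equation*}
\max_{\lambda \in \mathrm{sp}(M|_{\orthcomplBA})} |1 - \stepsize\lambda| = \max\bigl(|1-\stepsize\lambda_{\min}|,\, |1-\stepsize\lambda_{\max}|\bigr),
\end{equation*}
where the equality uses convexity of $\lambda \mapsto |1-\stepsize\lambda|$. This is at most $\rho$ if and only if $-\rho \le 1 - \stepsize\lambda_{\max}$ and $1 - \stepsize\lambda_{\min} \le \rho$; the other two inequalities follow from these via $\lambda_{\min} \le \lambda_{\max}$. These two conditions are exactly \eqref{eq:stepsize}, which proves the equivalence; the "only if" direction holds because the extreme eigenvalues are attained.

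For optimality, the interval in \eqref{eq:stepsize} is nonempty if and only if $(1-\rho)\lambda_{\max} \le (1+\rho)\lambda_{\min}$, i.e. $\rho \ge \rho^\star$. At $\rho = \rho^\star$ the two bounds coincide at $(1+\rho^\star)/\lambda_{\max} = 2/(\lambda_{\max}+\lambda_{\min}) = \stepsize^\star$. This gives \eqref{eq:stepsize_lowerbound_rho} and shows that no smaller contraction factor is achievable.
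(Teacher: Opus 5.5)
Your proof is correct and follows essentially the same route as the paper: restrict the linear part $\BBI-\stepsize\tilde\genericop\tilde\convToShear$ to the invariant subspace, identify its norm there as $\max(|1-\stepsize\lambda_{\min}|,\,|1-\stepsize\lambda_{\max}|)$, expand this into \eqref{eq:stepsize}, and use nonemptiness of the interval to obtain $\rho^\star$ and $\stepsize^\star$. The ``delicate point'' you raise is not a real gap, because the subspace in the lemma is by definition the orthogonal complement of $\ker\tilde\genericop\tilde\convToShear$ itself, not of $\ker\tilde\convToShear$ (the two coincide for the choice \eqref{eq:genericop}, as shown in Appendix~\ref{appendix:orthogonalcompl}, but the lemma does not use this); on that complement the symmetric matrix $\tilde\genericop\tilde\convToShear$ is automatically positive definite, with eigenvalues exactly the nonzero ones defining $\lambda_{\min}$ and $\lambda_{\max}$, so you need no extra interpretation or assumption.
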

\begin{proof}
    Denote for short the subspace $\mathV:=\orthcomplBA$. By construction, and since \ref{assum:3'} holds, we have:
    \begin{equation}
        \forwardop{\tilde\shearmap}(\cdot,\, \stepsize): \mathV \to \mathV.
    \label{eq:forwarop_subspace}
    \end{equation}
    Let $\convmap'$ and $\convmap'' \in \mathV$. The forward operator is an affine map, and we thus have
    \begin{align}
        \forwardop{\tilde\shearmap}(\convmap',\, \stepsize) - \forwardop{\tilde\shearmap}(\convmap'',\, \stepsize)
            = (\BBI - \stepsize\tilde\genericop\tilde\convToShear) \, (\convmap' - \convmap'') .
    \end{align}
    Since $\tilde\genericop\tilde\convToShear$ is symmetric positive definite on the orthogonal to its kernel, the Lipschitz constant $L_\tau$ of $\forwardop{\tilde\shearmap}(\cdot,\, \stepsize)$ on $\mathV$ is exactly given by 
    \begin{equation}
    L_\tau = \max(|1-\tau\lambda_{\min}|,|1-\tau\lambda_{\max}|) .
    \end{equation}
    Consequently, $\forwardop{\tilde\shearmap}(\cdot,\, \stepsize)$ is $\rho$-contractive on $\mathV$ if, and only if, $L_\tau \leq \rho$. Expanding this inequality, and considering that $\lambda_{\min} \leq \lambda_{\max}$, this is equivalent to \eqref{eq:stepsize}. The lower bound $\rho^\star$ introduced in \eqref{eq:lowerbound_rho} ensures the non-emptiness of the interval. Moreover, setting $\rho = \rho^\star$ reduces the range of possible step-size values to $\stepsize = \stepsize^\star$ such as introduced in \eqref{eq:stepsize_lowerbound_rho}. Finally, having $\rho < 1$ provides the contractive property of $\forwardop{\tilde\shearmap}(\cdot,\, \stepsize)$.
\end{proof}

Let us turn to the proof of Proposition~1.

\begin{proof}
According to \ref{assum:1} and \eqref{eq:forwarop_subspace}, both $\deepmodelTrained$ and $\forwardop{\tilde\shearmap}(\cdot,\, \stepsize)$ map $\mathV$ to $\mathV$.
Combining this with the non-expansiveness of $\deepmodelTrained$ on $\mathV$ \ref{assum:2}, we have, for all $\convmap'$ and $\convmap'' \in \mathV$,
\begin{align}
        \norm{\fixedpointiter{\tilde\shearmap}(\convmap',\, \stepsize) - \fixedpointiter{\tilde\shearmap}(\convmap'',\, \stepsize)}
            &\leq \norm{\forwardop{\tilde\shearmap}(\convmap',\, \stepsize) - \forwardop{\tilde\shearmap}(\convmap'',\, \stepsize)} .
\end{align}
Invoking now Lemma~1 and using the step-size choice in \ref{assum:4}, we get
\begin{align}
        \norm{\fixedpointiter{\tilde\shearmap}(\convmap',\, \stepsize) - \fixedpointiter{\tilde\shearmap}(\convmap'',\, \stepsize)}
            &\leq \rho \norm{\convmap' - \convmap''}.
\end{align}
Clearly, $\fixedpointiter{\tilde\shearmap}(\cdot,\, \stepsize)$ is a $\rho$-contraction on $\mathV$. Since $\mathV$ is obviously a complete metric space, we can now invoke the Banach-Picard fixed point theorem on $\fixedpointiter{\tilde\shearmap}(\cdot,\, \stepsize)$ to get that it has a unique fixed point $\convmapEstimate \in \mathV$. Moreover, using \ref{assum:1} and an induction argument, $(\convmap^{(k)})_{k \in \mathN} \subset \mathV$, and thus, the sequence converges linearly to $\convmapEstimate$ at the rate $\rho$.
\end{proof}

\section{Insights into convergence properties}
\label{appendix:convprops}

In this section, we sketch a mathematical framework for studying the convergence properties of our algorithm, which leads to constrain the noise level at which the denoiser must be trained, as explained in Sect.~\ref{subsec:pnp_smallresidual}.

Inserting \eqref{eq:invprob_lifted} into the fixed-point equation \eqref{eq:fixedpoint}, we infer that%
\begin{equation}\label{eq:convmapresidual}
\convmapEstimate =  \deepmodelTrained\Bigl(\convmap + \residualEstimate + \stepsize\tilde\genericop(-\tilde\convToShear\residualEstimate + \tilde\noise)\Bigr)  = \deepmodelTrained\Bigl(\convmap + (\BBI-\stepsize\tilde\genericop\tilde\convToShear)\residualEstimate + \noise_0\Bigr) ,
\end{equation}
where we denoted the residual $\residualEstimate := \convmapEstimate - \convmap$ and $\noise_0 = \stepsize\tilde\genericop\tilde\noise$. Observe that $\noise_0$ is a sample of $\noiseRand_0$ as defined in \eqref{eq:pnpnoise}.

Clearly, $\deepmodelTrained$ acts as a denoiser on the true convergence map $\convmap$ corrupted by the ``noise'' $(\BBI-\stepsize\tilde\genericop\tilde\convToShear)\residualEstimate + \noise_0$. Intuitively, one would like $(\BBI-\stepsize\tilde\genericop\tilde\convToShear)\residualEstimate$ to be small enough so that the dominating part of the noise would be that from $\noise_0$. This intuition is exact when $\convToShear=\BBI$ (\ie denoising), in which case, with the natural choice $\tilde\genericop=\BBI$ and $\tau=1$, \eqref{eq:convmapresidual} reads $\convmapEstimate = \deepmodelTrained\bigl(\convmap + \noise_0\bigr)$ with $\noise_0=\tilde\noise$. One should then train the denoiser with noise $\noise_0$.

To get some insight into this intuition in the general case, let us now bound the residual $\residualEstimate$. Let $\bzeta$ be a sample of the zero-mean random noise $\zetaRand$ used during training, and assume that the support of the distribution of $\convmapRand+\zetaRand$ is contained in $\orthcomplBA$.
Subtracting $\convmap$ to both sides of \eqref{eq:convmapresidual}, we have:
\begin{align}
\norm{\residualEstimate}
    &\leq \norm{\deepmodelTrained\Bigl(\convmap + (\BBI-\stepsize\tilde\genericop\tilde\convToShear)\residualEstimate + \noise_0\Bigr) \!-\! \deepmodelTrained\Bigl(\convmap+\bzeta\Bigr)} + \norm{\deepmodelTrained\Bigl(\convmap+\bzeta\Bigr) \!-\! \convmap} \nonumber\\
    &\leq \norm{(\BBI-\stepsize\tilde\genericop\tilde\convToShear)\residualEstimate + \noise_0 - \bzeta} + \norm{\deepmodelTrained\Bigl(\convmap+\bzeta\Bigr) - \convmap}
\label{eq:normresidual_2} \\
    &\leq \rho\norm{\residualEstimate} + \norm{\noise_0 - \bzeta} + \norm{\deepmodelTrained\Bigl(\convmap+\bzeta\Bigr) - \convmap} ,
\label{eq:normresidual_3} 
\end{align}
where we used \ref{assum:2} in \eqref{eq:normresidual_2}, and the triangle inequality and Lemma~1 (Appendix~\ref{appendix:proof_fixedpoint}) in \eqref{eq:normresidual_3}.
We then deduce, after taking the expectation with respect to the joint distribution of $(\convmapRand,\noiseRand,\zetaRand)$, that
\begin{align}
\Expval\left[\norm{\residualEstimateRand}\right] 
    &\leq (1-\rho)^{-1} \left(\,
        \Expval\left[
            \bignorm{\noiseRand_0 - \zetaRand}
        \right] + \Expval\left[
            \norm{\deepmodelTrained\bigl(
                \convmapRand+\zetaRand
            \bigr) - \convmapRand}
        \right]
    \,\right) \nonumber\\
    &\leq (1-\rho)^{-1} \left(
        \Expval\left[
            \bignorm{\noiseRand_0 - \zetaRand}^2
        \right]^{\frac12} \!+\! \Expval\left[
            \norm{\deepmodelTrained\bigl(
                \convmapRand+\zetaRand
            \bigr) - \convmapRand}^2
        \right]^{\frac12}
    \right) ,\label{eq:boundexpresidual}
\end{align}
where we used Jensen's inequality.

We aim to minimize the expected norm of the residual such as introduced above. Ideally, the denoiser $\deepmodelParameterized$ would be trained to minimize the population quadratic risk, which is nothing but the second term in the upper bound \eqref{eq:boundexpresidual}. In practice, however, training is performed using the empirical quadratic risk, which introduces an additional generalization error term that must be accounted for.
Regarding the first term in \eqref{eq:boundexpresidual}, it captures the discrepancy between the noise in training and that in the PnP iteration. We have
\[
\Expval\left[\norm{\noiseRand_0 - \zetaRand}^2\right] = \Expval\left[\norm{\noiseRand_0}^2\right] + \Expval\left[\norm{\zetaRand}^2\right] - 2 \nu \Expval\left[\norm{\noiseRand_0}^2\right]^{1/2}\Expval\left[\norm{\zetaRand}^2\right]^{1/2} ,
\]
where $\nu \in [-1,1]$ is the correlation coefficient between $\noiseRand_0$ and $\zetaRand$. Minimizing the above expression with respect to $\Expval\left[\norm{\zetaRand}^2\right]$ and $\nu$ gives that the optimal values are, respectively, $\Expval\left[\norm{\noiseRand_0}^2\right]$ and $1$. Clearly, the noise $\zetaRand$ ought to have the same variance as $\noiseRand_0$ and has correlation $1$ with it.

\section{Properties of $\tilde\genericop\tilde\convToShear$}
\label{appendix:orthogonalcompl}

Recall that $\tilde\genericop\tilde\convToShear := \tildeConvToShearTransp\tilde\covmatrNoise^{-1/2}\tilde\convToShear$ with the choice made in \eqref{eq:genericop}. This is obviously a symmetric positive semidefinite matrix, hence \ref{assum:3} follows. Observe also that:
\begin{equation}
    \image \tilde\genericop = \image \tildeConvToShearTransp\tilde\covmatrNoise^{-1/2} = \image \tildeConvToShearTransp,
\end{equation}
since $\tilde\covmatrNoise$ is invertible. On the other hand,
\begin{equation}
    \orthcomplBA = (\ker \tildeConvToShearTransp\tilde\covmatrNoise^{-1/2}\tilde\convToShear)^\perp = (\ker \tildeConvToShearTransp\tilde\convToShear)^\perp = (\ker \tilde\convToShear)^\perp,
\end{equation}
since $\tilde\covmatrNoise$ is invertible. As $\image \tildeConvToShearTransp = (\ker \tilde\convToShear)^\perp$, we conclude that \ref{assum:3'} holds.

\section{Proof of~\eqref{eq:almostidentity}}
\label{appendix:almostidentity}
We embark from \eqref{eq:deftildeAtildeSigma} to see that
\[
\convToShear^\top\convToShear = \Real(\convToShear^*\convToShear) ,
\]
where $\convToShear^*$ is the adjoint matrix of $\convToShear$. Using now \eqref{eq:defconvtoshear}, we get
\[
\convToShear^*\convToShear = \fouriermatrHerm|\fourierConvToShear|^2\fouriermatr .
\]
From \eqref{eq:deffourierconvtoshear}, it follows that $|\fourierConvToShear|^2$ is a diagonal matrix with
\[
|\fourierConvToShear[0,0]| = 0 \qand |\fourierConvToShear[i,i]| = 1, \forall i \geq 1.
\]
This entails that
\[
\convToShear^*\convToShear = \fouriermatrHerm\fouriermatr - \fouriermatr[1,\, :]^\top\fouriermatr[1,\, :] ,
\]
where $\fouriermatr[1,\, :] = $ is the first row of $\fouriermatr$. Let $\mathds{1}$ be the column vector of ones of size $K^2$. Since $\fouriermatr[1,\, :] := \mathds{1}^\top/K$ (the zero frequency component), and $\fouriermatr$ is unitary, we get that
\[
\convToShear^*\convToShear = \BBI - \mathds{1}\mathds{1}^\top/K^2 = \BBI - 1/K^2 .
\]

\section{Computing $\lambda_{\max}$ and $\lambda_{\min}$}
\label{appendix:powitlambdamin}

$\lambda_{\max}$ can be easily computed from the power iteration applied to $\tildeConvToShearTransp\tilde\covmatrNoise^{-1/2}\tilde\convToShear$ starting from a random uniform vector. We now turn to computing the smallest nonzero eigenvalue of $\tildeConvToShearTransp\tilde\covmatrNoise^{-1/2}\tilde\convToShear$. From Appendix~\ref{appendix:orthogonalcompl}, we have $\tildeConvToShearTransp\tilde\covmatrNoise^{-1/2}\tilde\convToShear = \ker \convToShear = \image \mathds{1}$, whose orthogonal is the subspace of zero-mean vectors. Let $\lambda_1=0< \lambda_2=\lambda_{\min} \leq \cdots \leq \lambda_{K^{^2}} = \lambda_{\max}$, be the eigenvalues of $\tildeConvToShearTransp\tilde\covmatrNoise^{-1/2}\tilde\convToShear$ sorted in ascending order. Denote $\bu_i$ the corresponding unit norm eigenvectors, where $\bu_1 = \mathds{1}^\top/K$. Define the matrix $\BBC := \lambda_{\max}\BBI - \tildeConvToShearTransp\tilde\covmatrNoise^{-1/2}\tilde\convToShear$, whose eigenvalues are $\lambda_{\max} - \lambda_i$, and eigenvectors are $\bu_i$. The key observation is that $\BBC$ is again symmetric positive semidefinite and, orthogonally to $\image \mathds{1}$, the largest eigenvalue of $\BBC$ is $\lambda_{\max} - \lambda_{\min}$.

The idea now is to apply the power iteration to $\BBC$ with an appropriate initial vector. For instance, we draw a vector $\bx_0$ form the uniform distribution (or any distribution which has a density with respect to the Lebesgue measure), in which case we can write
\[
\bx_0 = \sum_{i=1}^{K^2} c_i \bu_i, 
\]
where, for all $i$, $c_i \neq 0$ with probability one. We then project $\bx_0$ on $(\ker \convToShear)^\perp = (\image \mathds{1})^\perp$, that is,
\[
\bar{\bx}_0 = \bx_0 - \frac{1}{K^2}\sum_{i=1}^{K^2} \bx_0[i] .
\]
Since $(\ker \convToShear)^\perp$ is the span of $(\bu_i)_{i \geq 2}$, we have 
\[
\bar{\bx}_0 = \sum_{i=2}^{K^2} c_i \bu_i .
\]
Hence,
\[
\BBC^k \bar{\bx}_0 = c_2 (\lambda_{\max} - \lambda_{\min})^k \bu_2 + \sum_{i=3}^{K^2-1} c_i (\lambda_{\max} - \lambda_i)^k \bu_i .
\]
Applying the power iteration to $\BBC$ with initial vector $\bar{\bx}_0$ allows computing $\lambda_{\max} - \lambda_{\min}$. Plugging the value of $\lambda_{\max}$ (computed using, \eg, power iteration) gives an estimate of $\lambda_{\min}$.

\section{PnPMass and the FBS algorithm}
\label{appendix:pnp_fb}

We assume the existence of a proper convex, lower semicontinuous function $\regfun$ such that
\begin{equation}
    \deepmodelTrained(\convmap',\, \stepsize) = \prox_{\stepsize\regfun}(\convmap'),
\label{eq:deepmodelasprox}
\end{equation}
where the proximal operator is defined as
\begin{equation}
    \prox_{\stepsize\regfun}: \convmap' \mapsto \argmin_{\convmap''} \frac12 \normtwo{\convmap'' - \convmap'}^2 + \stepsize\regfun(\convmap'').
\end{equation}
Under this assumption, the fixed-point iteration \eqref{eq:fixpointoperator_noiseaware} can be written as
\begin{equation}
    \fixedpointiter{\tilde\shearmap} (\convmap',\, \stepsize) := \prox_{\stepsize\regfun}\left[
        \convmap' - \stepsize\nabla\!\datafidelity_{\tilde\shearmap}(\tilde\convToShear\,\cdot)(\convmap')
    \right],
\label{eq:fixpointoperator_prox}
\end{equation}
where we have considered the following data fidelity term
\begin{equation}
    \datafidelity_{\tilde\shearmap}(\tilde\convToShear\convmap') := \frac12\norm{\tilde\shearmap - \tilde\convToShear\convmap'}_{\tilde\covmatrNoise^{-1/2}}^2.
\label{eq:datafidelity}
\end{equation}
The operator \eqref{eq:fixpointoperator_prox} is nothing but the fixed-point operator of the well-known FBS algorithm.
It consists of a forward step---gradient descent with respect to the data fidelity term $\datafidelity_{\tilde\shearmap}(\tilde\convToShear\,\cdot)$---and a backward step---proximal mapping with respect to the regularization term $g$. This is the reason why PnP-type algorithms with an FBS structure are usually referred to as PnP-FBS in the literature \citep{Ryu2019,Ebner2024}.

It is well-known that if the step-size satisfies $0 < \tau < 2/\bignorm{\tilde\convToShear^\top\tilde\covmatrNoise^{-1/2} \tilde\convToShear}$, then the sequence of iterates of FBS with operator \eqref{eq:fixpointoperator_prox} converges to a minimizer $\convmapEstimate$ of the following minimization problem---similar in form to~\eqref{eq:minprob}:
\begin{equation}
    \convmapEstimate \in \argmin_{\convmap'} \datafidelity_{\tilde\shearmap}(\tilde\convToShear\convmap') + \regfun(\convmap'),
\label{eq:minprob2}
\end{equation}
with the proviso that the set of minimizers is non-empty. 

The FBS has been widely used for astronomical data analysis; see the comprehensive treatment in \citep{Starck2015}. The iterative Wiener filtering algorithm \citep{Bobin2012} for mass mapping (see Sect.~\ref{subsubsec:background_relatedwork_classical}), is also an instance of the FBS algorithm.

While some proximal operators can be seen as denoisers, the converse is clearly not true. Thus, there is no reason that a deep-learning based denoiser $\deepmodelTrained(\cdot,\, \stepsize)$ to be interpreted as a proximal operator. At best, the denoiser network can be constrained during training to be firmly non-expansive, that is, to be the resolvent of a maximally monotone set-valued operator as proposed in \citep{Terris2020}. While this point of view brings interpretability, it is however not clear it leads to better denoising results. Consequently, PnP algorithms such as PnPMass are not expected to solve any optimization problem in general, hence the fixed-point perspective that we advocated in this paper.

It is worth recalling again (see Sect.~\ref{subsec:pnp_operator}) that the data fidelity term defined in ~\eqref{eq:datafidelity} does not correspond to the negative log-likelihood typically used in Bayesian frameworks: the Mahalanobis norm would be defined with respect to $\tilde\covmatrNoise^{-1}$ rather than $\tilde\covmatrNoise^{-1/2}$. Furthermore, the regularization function $\regfun$ does not necessarily correspond to a negative log-prior; see \citet{Starck2013} for a thorough discussion. As a result, $\convmapEstimate$ should not be interpreted as a maximum a posteriori (MAP) estimate, even if~\eqref{eq:deepmodelasprox} is satisfied.

\onecolumn

\section{Additional plots}
\label{appendix:plots}

\vspace{-10pt}
\begin{figure*}[h]
    \centering
    \begin{subfigure}{0.48\textwidth}
        \centering
        \includegraphics[width=0.9\linewidth]{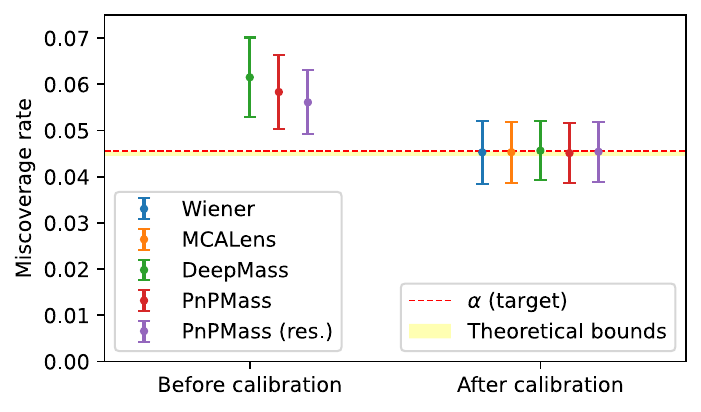}
        \vspace{-5pt}
        \caption{Empirical miscoverage rate over active pixels, before and after calibration (using $1\,024$ calibration examples). Error bars indicate the mean and standard deviation across the $512$ test examples from the $\kappa$TNG dataset. As in Fig.~\ref{fig:summary}, Wiener and MCALens results are only displayed after calibration. The theoretical bounds (in yellow) come from \eqref{eq:miscoveragerate_cqr}.}
        \label{fig:miscoveragerate}
    \end{subfigure}
    \hfill
    \begin{subfigure}{0.48\textwidth}
        \centering
        \includegraphics[trim=0 0 0 7pt, clip, width=0.9\linewidth]{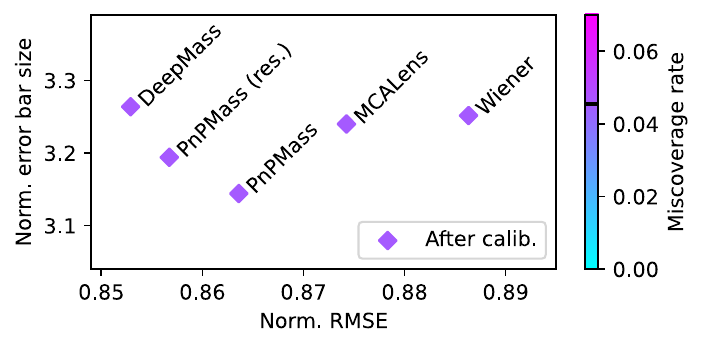}
        \vspace{-5pt}
        \caption{
            Mean length of prediction intervals plotted against reconstruction accuracy, after calibration. Unlike Fig.~\ref{fig:summary}, all methods have been initialized with zero-valued error bars (no order-2 moment network). We observe that ignoring the pre-calibration step with order-2 moment networks result in larger error bars for PnPMass and DeepMass, by comparison with Fig.~\ref{fig:summary}. Furthermore, DeepMass now produces larger error bars than Wiener or MCALens, despite being significantly more accurate.
        }
        \label{fig:summary_zeroinit}
    \end{subfigure}
    \caption{Left: Miscoverage rate before and after calibration. Right: Zero-initialization error bars.}
    \label{fig:additional_plots}
\end{figure*}

\section{Visual representations}
\label{appendix:visual_representations}

\vspace{-5pt}
\begin{figure*}[h]
    \centering
    \begin{subfigure}[b]{0.28\textwidth}
        \centering
        \includegraphics[height=0.15\textheight]{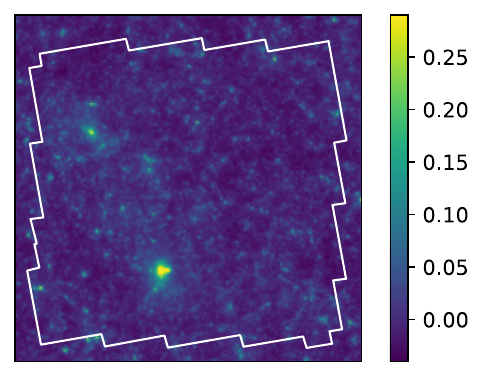}
        \label{subfig:convshear_conv}
    \end{subfigure}
    \hspace{5pt}
    \begin{subfigure}[b]{0.28\textwidth}
        \centering
        \includegraphics[height=0.15\textheight]{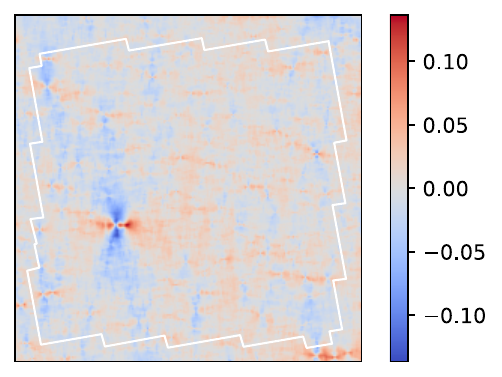}
        \label{subfig:convshear_shear_re}
    \end{subfigure}
    \hspace{5pt}
    \begin{subfigure}[b]{0.28\textwidth}
        \centering
        \includegraphics[height=0.15\textheight]{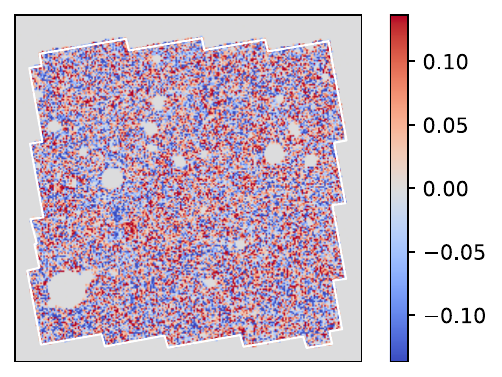}
        \label{subfig:convshear_shear_im}
    \end{subfigure}
    \caption{Example of mock convergence map obtained from $\kappa$TNG simulations (left), and the real part of the corresponding clean (middle) and noisy (right) shear map. The COSMOS boundaries are delimited in white.
    }
    \label{fig:convshear}
\end{figure*}

\vspace{-5pt}

\begin{figure*}[h]
    \centering
    \small
    \vspace{-8pt}
    \begin{tikzpicture}


        \node (mmgan) {\includegraphics[height=0.21\textwidth, trim={0 0 50pt 0}, clip]{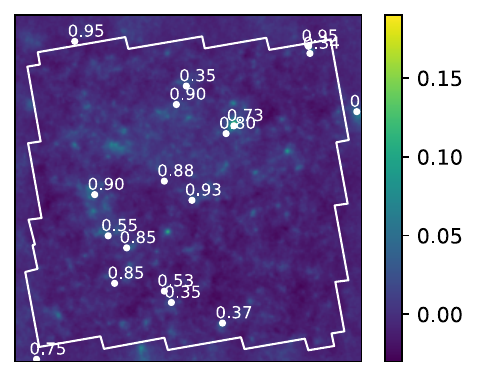}};
        \node[anchor=west, xshift=-5pt] (deepmass) at (mmgan.east) {\includegraphics[height=0.21\textwidth, trim={0 0 50pt 0}, clip]{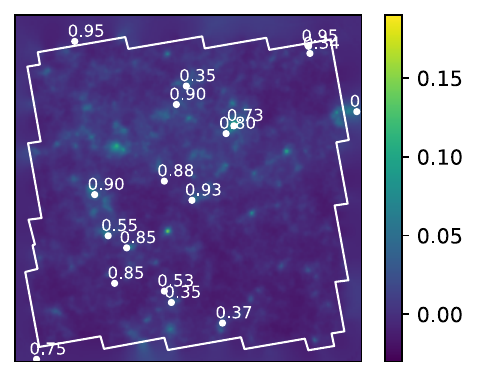}};
        \node[anchor=west, xshift=-5pt] (pnpmass) at (deepmass.east) {\includegraphics[height=0.21\textwidth, trim={0 0 50pt 0}, clip]{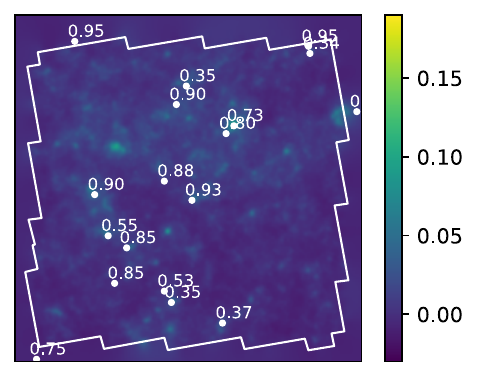}};
        \node[anchor=west, xshift=-5pt] (respnpmass) at (pnpmass.east) {\includegraphics[height=0.21\textwidth]{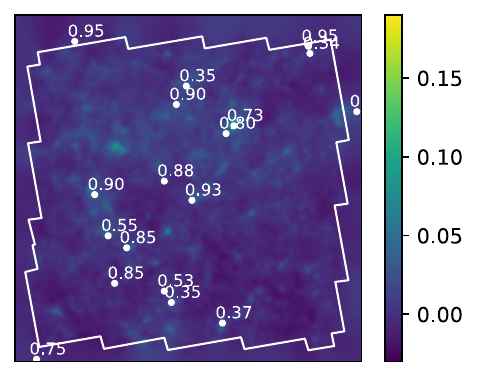}};


        \node[anchor=north west, yshift=9pt] (mmgan_std) at (mmgan.south west) {\includegraphics[height=0.206\textwidth, trim={0 4.8pt 52pt 4.8pt}, clip]{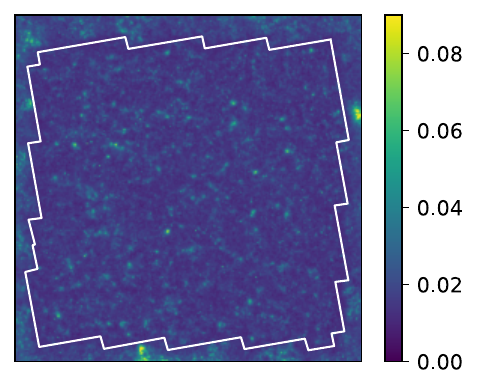}};
        \node[anchor=north west, yshift=9pt] (deepmass_std) at (deepmass.south west) {\includegraphics[height=0.206\textwidth, trim={0 4.8pt 52pt 4.8pt}, clip]{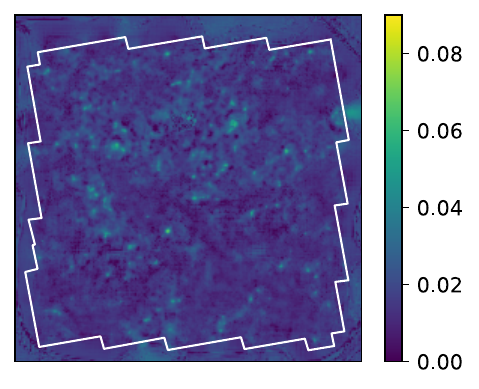}};
        \node[anchor=north west, yshift=9pt] (pnpmass_std) at (pnpmass.south west) {\includegraphics[height=0.206\textwidth, trim={0 4.8pt 52pt 4.8pt}, clip]{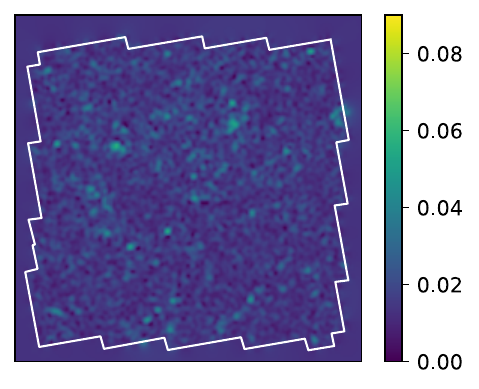}};
        \node[anchor=north west, yshift=9pt] (respnpmass_std) at (respnpmass.south west) {\includegraphics[height=0.206\textwidth, trim={0 4.8pt 0 4.8pt}, clip]{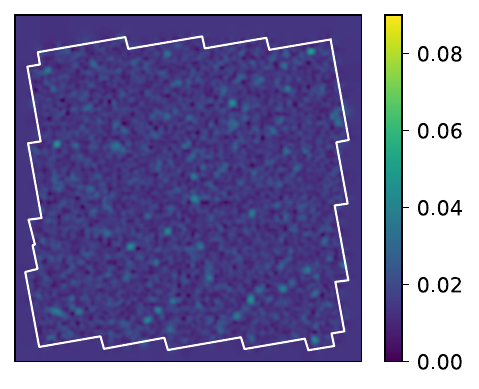}};


        \node (mmgan caption) at (mmgan.north) {MMGAN};
        \node (deepmass caption) at (deepmass.north) {DeepMass};
        \node (pnpmass caption) at (pnpmass.north) {\textbf{PnPMass}};
        \node[xshift=-12pt] (respnpmass caption) at (respnpmass.north) {\textbf{PnPMass (res.)}};


        \node[xshift=-2pt, rotate=90] (pointestimate 1) at (mmgan.west) {$\convmapEstimate$};
        \node[xshift=-2pt, rotate=90] (stdestimate 1) at (mmgan_std.west) {$\stdmapEstimate$};

        \end{tikzpicture}
    \vspace{-8pt}
    \caption{Reconstruction of the COSMOS field, using the same settings as in Fig.~\ref{fig:visual_representations_cosmos}.
    Top row: point estimates; the white scatter plot indicates the location of known x-ray clusters \citep{FinoguenovXmmNewton2007}. Bottom row: uncertainty maps, obtained with order-2 moment networks for PnPMass and DeepMass, and by computing the per-pixel standard deviation over $32$ samples for MMGAN.}
    \label{fig:visual_representations_cosmos_bis}
\end{figure*}

\begin{figure*}[h]
    \centering
    \small
    \vspace{-8pt}
    \begin{tikzpicture}


        \node (wiener) {\includegraphics[height=0.188\textwidth, trim={0 0 50pt 0}, clip]{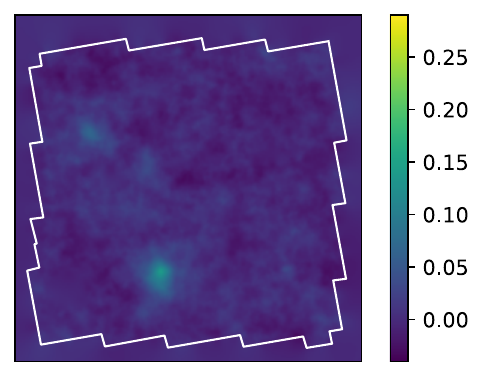}};
        \node[anchor=west, xshift=-10pt] (mcalens) at (wiener.east) {\includegraphics[height=0.188\textwidth, trim={0 0 50pt 0}, clip]{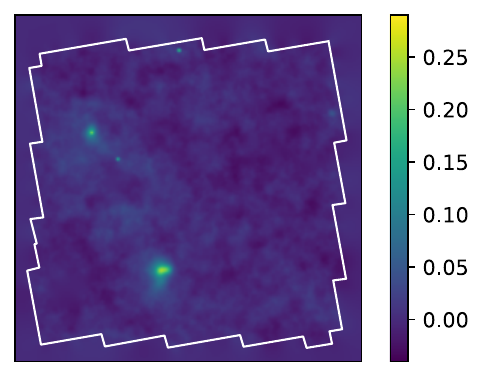}};
        \node[anchor=west, xshift=-10pt] (deepmass) at (mcalens.east) {\includegraphics[height=0.188\textwidth, trim={0 0 50pt 0}, clip]{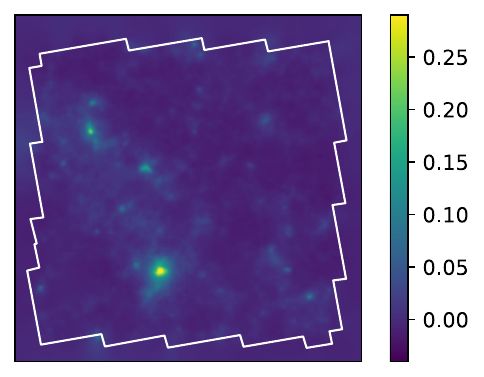}};
        \node[anchor=west, xshift=-10pt] (pnpmass) at (deepmass.east) {\includegraphics[height=0.188\textwidth, trim={0 0 50pt 0}, clip]{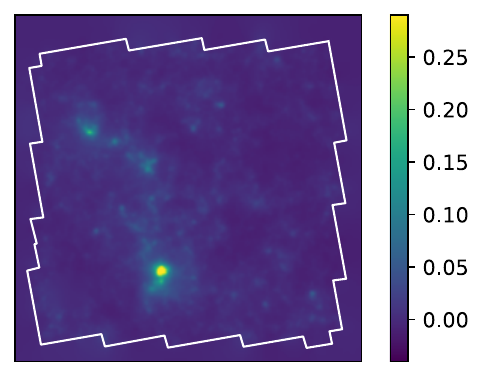}};
        \node[anchor=west, xshift=-10pt] (respnpmass) at (pnpmass.east) {\includegraphics[height=0.188\textwidth]{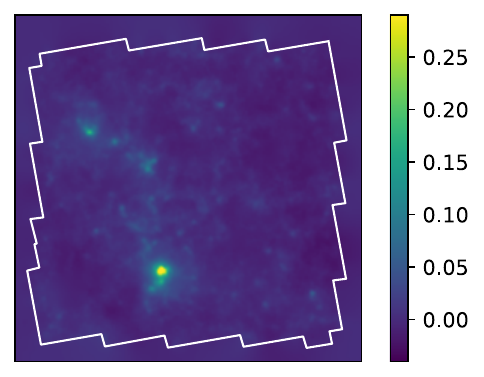}};


        \node[anchor=north west, yshift=9pt] (wiener_std) at (wiener.south west) {\includegraphics[height=0.188\textwidth, trim={0 4.8pt 52pt 4.8pt}, clip]{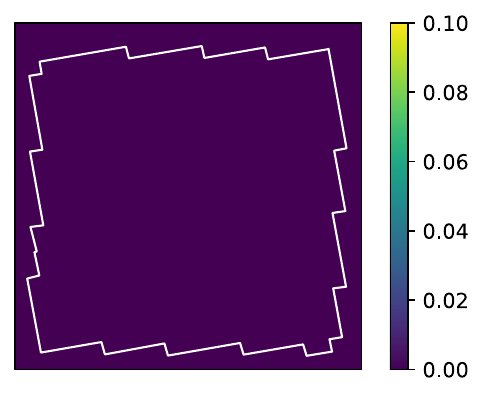}};
        \node[anchor=north west, yshift=9pt] (mcalens_std) at (mcalens.south west) {\includegraphics[height=0.188\textwidth, trim={0 4.8pt 52pt 4.8pt}, clip]{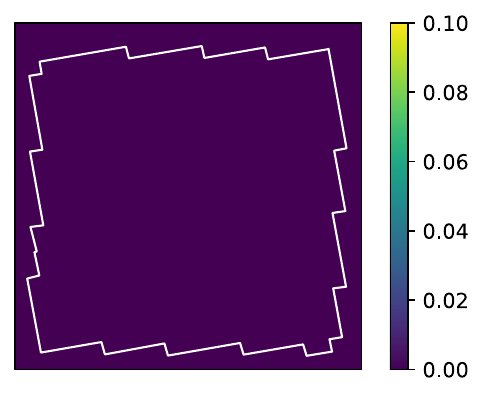}};
        \node[anchor=north west, yshift=9pt] (deepmass_std) at (deepmass.south west) {\includegraphics[height=0.188\textwidth, trim={0 4.8pt 52pt 4.8pt}, clip]{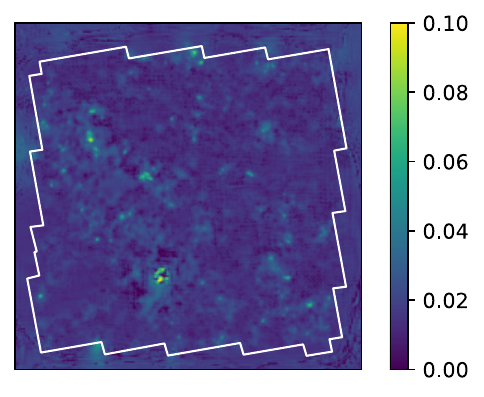}};
        \node[anchor=north west, yshift=9pt] (pnpmass_std) at (pnpmass.south west) {\includegraphics[height=0.188\textwidth, trim={0 4.8pt 52pt 4.8pt}, clip]{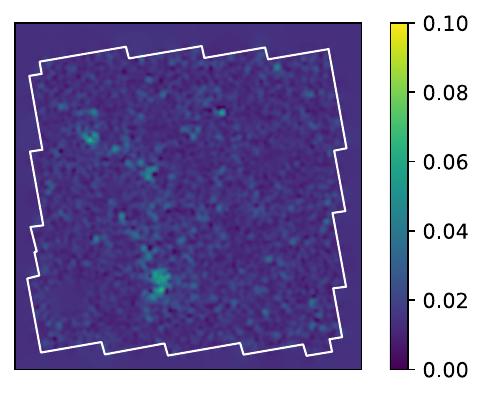}};
        \node[anchor=north west, yshift=9pt] (respnpmass_std) at (respnpmass.south west) {\includegraphics[height=0.188\textwidth, trim={0 4.8pt 0 4.8pt}, clip]{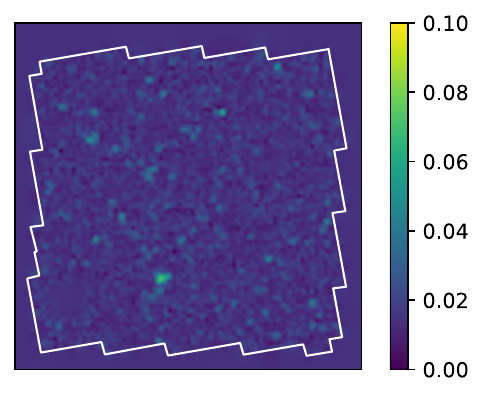}};


        \node[anchor=north west, yshift=9pt] (wiener_low) at (wiener_std.south west) {\includegraphics[height=0.188\textwidth, trim={0 0 52pt 0}, clip]{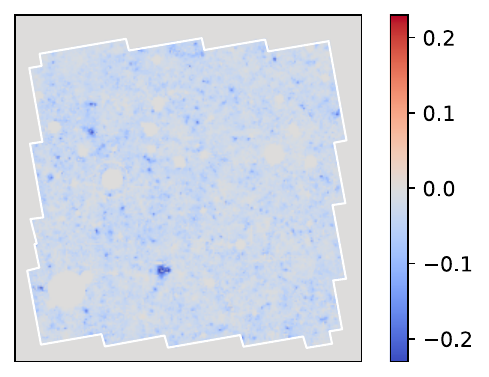}};
        \node[anchor=north west, yshift=9pt] (mcalens_low) at (mcalens_std.south west) {\includegraphics[height=0.188\textwidth, trim={0 0 52pt 0}, clip]{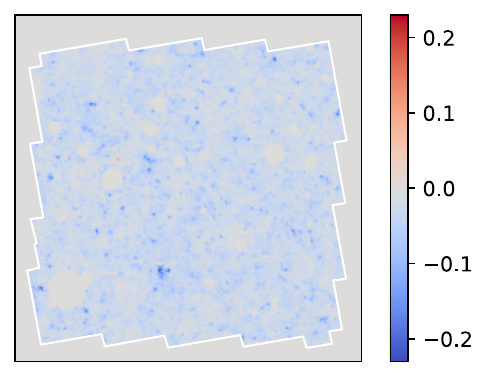}};
        \node[anchor=north west, yshift=9pt] (deepmass_low) at (deepmass_std.south west) {\includegraphics[height=0.188\textwidth, trim={0 0 52pt 0}, clip]{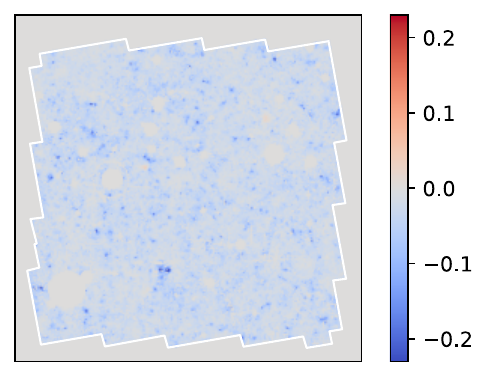}};
        \node[anchor=north west, yshift=9pt] (pnpmass_low) at (pnpmass_std.south west) {\includegraphics[height=0.188\textwidth, trim={0 0 52pt 0}, clip]{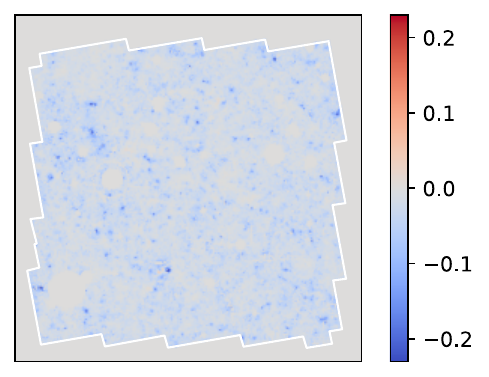}};
        \node[anchor=north west, yshift=9pt] (respnpmass_low) at (respnpmass_std.south west) {\includegraphics[height=0.188\textwidth]{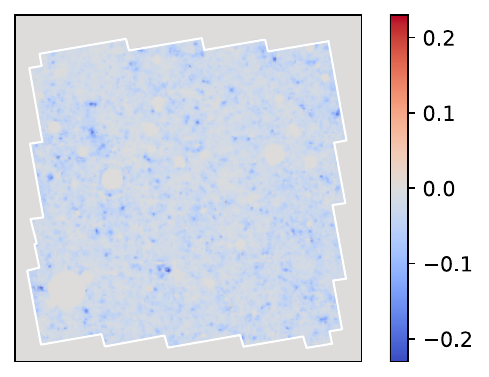}};

        \node[anchor=north west, yshift=9pt] (wiener_high) at (wiener_low.south west) {\includegraphics[height=0.188\textwidth, trim={0 0 52pt 0}, clip]{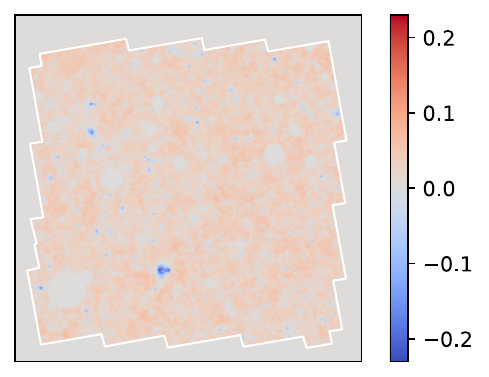}};
        \node[anchor=north west, yshift=9pt] (mcalens_high) at (mcalens_low.south west) {\includegraphics[height=0.188\textwidth, trim={0 0 52pt 0}, clip]{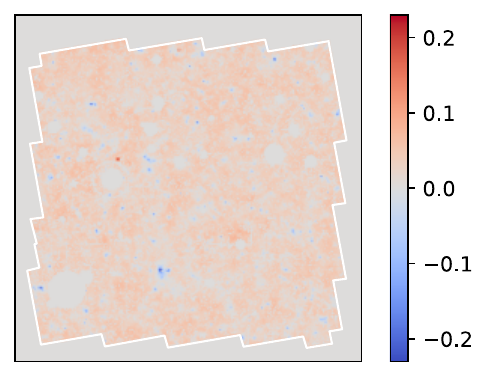}};
        \node[anchor=north west, yshift=9pt] (deepmass_high) at (deepmass_low.south west) {\includegraphics[height=0.188\textwidth, trim={0 0 52pt 0}, clip]{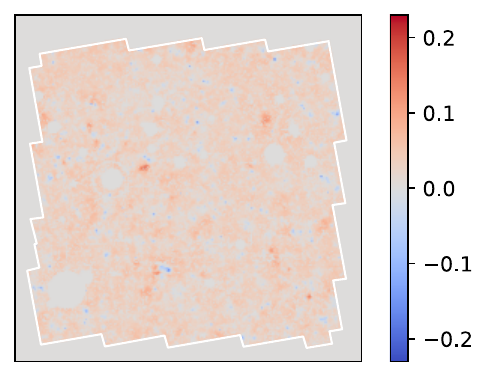}};
        \node[anchor=north west, yshift=9pt] (pnpmass_high) at (pnpmass_low.south west) {\includegraphics[height=0.188\textwidth, trim={0 0 52pt 0}, clip]{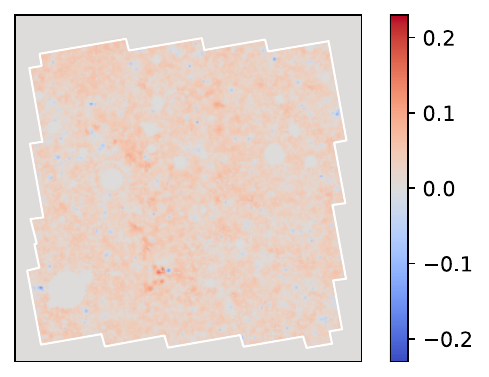}};
        \node[anchor=north west, yshift=9pt] (respnpmass_high) at (respnpmass_low.south west) {\includegraphics[height=0.188\textwidth]{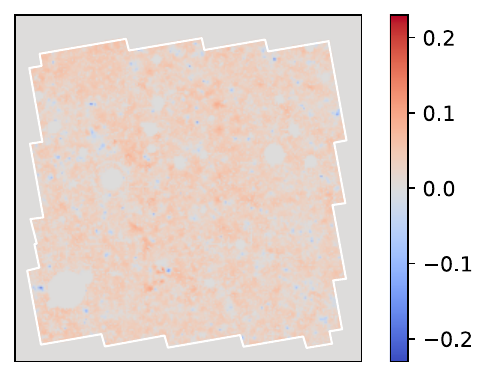}};


        \node (wiener caption) at (wiener.north) {Wiener$^\ast$};
        \node (mcalens caption) at (mcalens.north) {MCALens$^\ast$};
        \node (deepmass caption) at (deepmass.north) {DeepMass};
        \node (pnpmass caption) at (pnpmass.north) {\textbf{PnPMass}};
        \node[xshift=-12pt] (respnpmass caption) at (respnpmass.north) {\textbf{PnPMass (res.)}};


        \node[xshift=-2pt, rotate=90] (pointestimate 1) at (wiener.west) {$\convmapEstimate$};
        \node[xshift=-2pt, rotate=90] (stdestimate 1) at (wiener_std.west) {$\stdmapEstimate$};
        \node[xshift=-2pt, rotate=90] (nocalib low 1) at (wiener_low.west) {$\convmapEstimateLowCQR - \convmap$};
        \node[xshift=-2pt, rotate=90] (nocalib high 1) at (wiener_high.west) {$\convmapEstimateHighCQR - \convmap$};

        \end{tikzpicture}
    \vspace{-8pt}
    \caption{Example of mass-mapping reconstructions on the $\kappa$TNG test dataset, for each mass-mapping method, run with the same parameters as in Table~\ref{table:rmse}. The corresponding ground truth convergence map $\convmap$ and noisy input $\shearmap$ are displayed in Fig.~\ref{fig:convshear}. Top row: Point estimates, $\convmapEstimate$. Second row: Standard deviation estimates, $\stdmapEstimate$, obtained with order-2 moment networks (used for pre-calibration error bars). Bottom rows: Lower and upper bounds, $\convmapEstimateLowCQR$ and $\convmapEstimateHighCQR$, after calibration with CQR. For visualization purpose, the bounds have been centered with respect to the ground truth $\convmap$. $^\ast$Zero-sized error bars before calibration.}
    \label{fig:visual_representations}
\end{figure*}

\end{appendix}

\end{document}

%% file: inputs/mycommands.tex
\DeclareMathOperator*{\argmin}{argmin}

\DeclareMathOperator{\Real}{Re}
\DeclareMathOperator{\Imag}{Im}

\DeclareMathOperator{\prox}{prox}

\DeclareMathOperator{\image}{ran}

\newcommand{\cmark}{\ding{51}}
\newcommand{\xmark}{\ding{55}}

\newcommand{\mathC}{\mathbb{C}}
\newcommand{\mathE}{\mathbb{E}}

\newcommand{\mathN}{\mathbb{N}}
\newcommand{\mathP}{\mathbb{P}}

\newcommand{\mathR}{\mathbb{R}}

\newcommand{\mathV}{\mathbb{V}}

\newcommand{\MD}{\mathcal D}

\newcommand{\MF}{\mathcal F}
\newcommand{\MG}{\mathcal G}
\newcommand{\MH}{\mathcal H}

\newcommand{\MM}{\mathcal M}
\newcommand{\MN}{\mathcal N}

\newcommand{\MP}{\mathcal P}

\newcommand{\MU}{\mathcal U}
\newcommand{\MV}{\mathcal V}
\newcommand{\MW}{\mathcal W}

\newcommand{\bn}{\boldsymbol n}

\newcommand{\br}{\boldsymbol r}

\newcommand{\bu}{\boldsymbol u}
\newcommand{\bv}{\boldsymbol v}
\newcommand{\bw}{\boldsymbol w}
\newcommand{\bx}{\boldsymbol x}

\newcommand{\BBA}{\mathbf A}
\newcommand{\BBB}{\mathbf B}
\newcommand{\BBC}{\mathbf C}

\newcommand{\BBF}{\mathbf F}
\newcommand{\BBI}{\mathbf I}

\newcommand{\BBM}{\mathbf M}

\newcommand{\BBP}{\mathbf P}

\newcommand{\SFK}{\mathsf{K}}

\newcommand{\SFN}{\mathsf{N}}

\newcommand{\SFR}{\mathsf{R}}

\newcommand{\SFT}{\mathsf{T}}

\newcommand{\SFZ}{\mathsf{Z}}

\newcommand{\SFGamma}{\mathsf{\Gamma}}

\newcommand{\bgamma}{\boldsymbol \gamma}

\newcommand{\bzeta}{\boldsymbol \zeta}

\newcommand{\bkappa}{\boldsymbol \kappa}

\newcommand{\btheta}{\boldsymbol \theta}

\newcommand{\blambda}{\boldsymbol \lambda}
\newcommand{\bsigma}{\boldsymbol \sigma}

\newcommand{\bomega}{\boldsymbol \omega}

\newcommand{\BSigma}{\boldsymbol \Sigma}

\newcommand{\zetaRand}{\boldsymbol \SFZ}

\newcommand{\bzero}{\boldsymbol 0}

\newcommand{\red}[1]{{\color{red}{#1}}}

\newcommand{\range}[2]{\left\{#1, \ldots, #2\right\}}
\newcommand{\smallrange}[2]{\{#1, \ldots, #2\}}

\newcommand{\oneto}[1]{\range{0}{#1}}
\newcommand{\smalloneto}[1]{\smallrange{1}{#1}}

\newcommand{\interval}[2]{\left[#1,\, #2\right]}

\newcommand{\intervalexclr}[2]{\left[#1,\, #2\right[}
\newcommand{\intervalexcllr}[2]{\left]#1,\, #2\right[}

\newcommand{\zeroexcloneexcl}{\intervalexcllr{0}{1}}

\newcommand{\tuple}[2]{\{#1,\, #2\}}

\newcommand{\Proba}{\mathP}
\newcommand{\Expval}{\mathE}
\newcommand{\Var}{\mathV}

\newcommand{\condproba}[2]{\Proba\left\{ #1 \,\middle|\, #2 \right\}}

\newcommand{\condexpval}[2]{\Expval\left[ #1 \;\middle|\; #2 \right]}
\newcommand{\bigcondexpval}[2]{\Expval\bigl[ #1 \,|\, #2 \bigr]}
\newcommand{\Bigcondexpval}[2]{\Expval\Bigl[ #1 \bigm| #2 \Bigr]}

\newcommand{\condvar}[2]{\Var\left[ #1 \;\middle|\; #2 \right]}
\newcommand{\bigcondvar}[2]{\Var\bigl[ #1 \,|\, #2 \bigr]}
\newcommand{\Bigcondvar}[2]{\Var\Bigl[ #1 \bigm| #2 \Bigr]}

\renewcommand{\arraystretch}{1.5} 

\renewcommand{\texttt}[1]{%
	\begingroup
	\ttfamily
	\begingroup\lccode`~=`/\lowercase{\endgroup\def~}{/\discretionary{}{}{}}%
	\begingroup\lccode`~=`[\lowercase{\endgroup\def~}{[\discretionary{}{}{}}%
	\begingroup\lccode`~=`.\lowercase{\endgroup\def~}{.\discretionary{}{}{}}%
	\catcode`/=\active\catcode`[=\active\catcode`.=\active
	\scantokens{#1\noexpand}%
	\endgroup
} 

\newcommand{\norm}[1]{\left\|#1\right\|}

\newcommand{\normtwo}[1]{\norm{#1}_2}

\newcommand{\bignorm}[1]{\bigl\|#1\bigr\|}

\newcommand{\smallnorm}[1]{\|#1\|}

\newcommand{\qand}{\quad\mbox{and}\quad}
\newcommand{\qqand}{\qquad\mbox{and}\qquad}

\newcommand{\qwhere}{\quad\mbox{where}\quad}

\newcommand{\qwith}{\quad\mbox{with}\quad}

\newcommand{\todo}[1][]{%
  	{\red{\ifthenelse{\equal{#1}{}}%
    	{\textbf{[TODO]}}%
    	{\textbf{[TODO: #1]}}%
  	}}\xspace
}

\makeatletter
\DeclareRobustCommand\onedot{\futurelet\@let@token\@onedot}
\def\@onedot{\ifx\@let@token.\else.\null\fi\xspace}

\def\eg{e.g\onedot} 
\def\ie{i.e\onedot}

\def\wrt{w.r.t\onedot}

\makeatother

%% file: inputs/mycommands_specific.tex
\DeclareMathOperator*{\herm}{\ast}
\DeclareMathOperator*{\ellip}{ell}

\DeclareMathOperator*{\deepmass}{dm}
\DeclareMathOperator*{\deepposterior}{dp}

\DeclareMathOperator*{\cqr}{cqr}

\DeclareMathOperator*{\gauss}{g}
\DeclareMathOperator*{\nongauss}{ng}
\DeclareMathOperator*{\iterations}{it}

\DeclareMathOperator*{\maxval}{max}

\DeclareMathOperator{\forward}{\MF}

\DeclareMathOperator*{\baryon}{b}
\DeclareMathOperator*{\matter}{m}
\DeclareMathOperator*{\spectralindex}{s}

\DeclareMathOperator{\km}{km}
\DeclareMathOperator{\s}{s}
\DeclareMathOperator{\Mpc}{Mpc}

\newcommand{\shearmap}{\bgamma}
\newcommand{\convmap}{\bkappa}
\newcommand{\varmap}{\bv}
\newcommand{\stdmap}{\bsigma}

\newcommand{\noise}{\bn}
\newcommand{\residual}{\br}
\newcommand{\residualRand}{\boldsymbol{\SFR}}

\newcommand{\stdval}{\sigma}

\newcommand{\shearmapRand}{\boldsymbol{\SFGamma}}
\newcommand{\convmapRand}{\boldsymbol{\SFK}}
\newcommand{\noiseRand}{\boldsymbol{\SFN}}

\newcommand{\covmatr}{\BSigma}
\newcommand{\covmatrNoise}{\covmatr}

\newcommand{\fouriercovmatr}{\BBP}

\newcommand{\fouriermatr}{\BBF}
\newcommand{\fouriermatrHerm}{\fouriermatr^{\herm}}
\newcommand{\powerspectrum}{\fouriercovmatr_{\!\gauss}}

\newcommand{\convToShear}{\BBA}
\newcommand{\mask}{\BBM}
\newcommand{\fourierConvToShear}{\BBP}

\newcommand{\genericop}{\BBB}

\newcommand{\convToShearHerm}{\convToShear^{\!\herm}}
\newcommand{\tildeConvToShearTransp}{{\tilde\convToShear}^{\!\top}}

\newcommand{\convmapEstimate}{\hat\convmap}
\newcommand{\convmapEstimateLow}{\convmapEstimate^{-}}
\newcommand{\convmapEstimateHigh}{\convmapEstimate^{+}}
\newcommand{\convmapEstimateHighLow}{\convmapEstimate^{\pm}}

\newcommand{\convmapEstimateLowCQR}{\convmapEstimateLow_{\cqr}}
\newcommand{\convmapEstimateHighCQR}{\convmapEstimateHigh_{\cqr}}
\newcommand{\residualEstimate}{\hat\residual}
\newcommand{\residualEstimateRand}{\hat\residualRand}

\newcommand{\convmapEstimateRand}{\hat\convmapRand}
\newcommand{\convmapEstimateLowRand}{\convmapEstimateRand^{-}}
\newcommand{\convmapEstimateHighRand}{\convmapEstimateRand^{+}}

\newcommand{\convmapEstimateLowCQRRand}{\convmapEstimateLowRand_{\cqr}}
\newcommand{\convmapEstimateHighCQRRand}{\convmapEstimateHighRand_{\cqr}}

\newcommand{\convmapGauss}{\convmap^{\gauss}}
\newcommand{\convmapNongauss}{\convmap^{\nongauss}}

\newcommand{\shearmapLiftedNongauss}{\tilde\shearmap^{\nongauss}}

\newcommand{\varmapEstimate}{\hat\varmap}
\newcommand{\stdmapEstimate}{\hat\stdmap}

\newcommand{\regfun}{g}
\newcommand{\datafidelity}{f}
\newcommand{\pdf}{p}

\newcommand{\pnpmass}{\MP}
\newcommand{\pnpmassnoisy}{\pnpmass}
\newcommand{\iterativewiener}[1]{\MW_{\!#1}}

\newcommand{\forwardop}[1]{\forward_{\! #1}}
\newcommand{\fixedpointiter}[1]{\MH_{#1}}
\newcommand{\deepmodel}{\MD}
\newcommand{\deepmassmodel}{\MM}
\newcommand{\deepmodelbis}{\MG}
\newcommand{\deepmodelbisvar}{\MV}
\newcommand{\deepmodelParam}[1]{\deepmodel_{#1}}
\newcommand{\deepmassmodelParam}[1]{\deepmassmodel_{#1}}
\newcommand{\deepmodelbisParam}[1]{\deepmodelbis_{\! #1}}

\newcommand{\paramDeepmodel}{{\btheta}}
\newcommand{\paramDeepmodelbis}{{\bomega}}
\newcommand{\paramDeepmodelbisvar}{{\bomega}}
\newcommand{\paramDeepmassmodel}{{{\bw}}}

\newcommand{\deepmodelParameterized}{\deepmodelParam{\paramDeepmodel}}
\newcommand{\deepmassmodelParameterized}{\deepmassmodelParam{\paramDeepmassmodel}}
\newcommand{\deepmodelbisParameterized}{\deepmodelbisParam{\paramDeepmodelbis}}
\newcommand{\deepmodelbisvarParameterized}{\deepmodelbisParam{\paramDeepmodelbisvar}}
\newcommand{\deepmodelTrained}{\deepmodelParam{\hat\paramDeepmodel}}
\newcommand{\deepmassmodelTrained}{\deepmassmodelParam{\hat\paramDeepmassmodel}}
\newcommand{\deepmodelbisTrained}{\deepmodelbisParam{\hat\paramDeepmodelbis}}

\newcommand{\cdfg}{\Phi}

\newcommand{\imgsize}{K}
\newcommand{\ngalperpix}[1]{N_{#1}}
\newcommand{\sizeTrainingset}{n}
\newcommand{\sizeCalibrationset}{m}
\newcommand{\niter}{N_{\iterations}}

\newcommand{\stepsize}{\tau}
\newcommand{\stepsizeRand}{\SFT}

\newcommand{\stepsizeMax}{\stepsize_{\!\maxval}}

\newcommand{\multfactbounds}{\mu}

\newcommand{\errbarsize}{\epsilon}
\newcommand{\lipschitzcst}{\beta}
\newcommand{\calibparam}{\lambda}
\newcommand{\calibparamVec}{\blambda}

\newcommand{\datasetMM}[2]{(\shearmap_i,\, \convmap_i)_{i = #1}^{#2}}
\newcommand{\datasetMMLifted}[2]{(\tilde\shearmap_i,\, \convmap_i)_{i = #1}^{#2}}

\newcommand{\trainingsetMM}{\datasetMM{1}{\sizeTrainingset}}
\newcommand{\trainingsetMMLifted}{\datasetMMLifted{1}{\sizeTrainingset}}

\newcommand{\calibrationsetMMLifted}{\datasetMMLifted{\sizeTrainingset + 1}{\sizeTrainingset + \sizeCalibrationset}}

\newcommand{\datasetMMLiftedRand}[2]{(\tilde\shearmapRand_i,\, \convmapRand_i)_{i = #1}^{#2}}

\newcommand{\calibrationsetMMLiftedRand}{\datasetMMLiftedRand{\sizeTrainingset + 1}{\sizeTrainingset + \sizeCalibrationset}}

\newcommand{\priorpdf}[2]{\pdf_{#1}\!\left(#2\right)}
\newcommand{\condpriorpdf}[2]{\pdf_{#1}\!\left(#2\right)}

\newcommand{\cond}[2]{#1 \,|\, #2}

\newcommand{\kerBA}{\ker\tilde\genericop\tilde\convToShear}

\newcommand{\orthcomplBA}{(\kerBA)^\perp}

%% file: inputs/flowchart.tex
\begin{figure}
    \centering\small
    \begin{tikzpicture}

        \node (init) {$\convmapEstimate^{(0)}$};
        
        \node[draw, anchor=center, yshift=-0.7cm] at (init.south) (forward1) {Forward};
        \node[draw, anchor=west, xshift=0.5cm, yshift=-0.7cm] at (forward1.east) (backward1) {Backward -- $\deepmodelTrained$};
        
        \node[draw, anchor=center, yshift=-1.4cm] at (forward1.center) (forward2) {Forward};
        \node[draw, anchor=center, yshift=-1.4cm] at (backward1.center) (backward2) {Backward -- $\deepmodelTrained$};
        
        \node[draw, anchor=center, yshift=-1.4cm] at (forward2.center) (forward3) {Forward};
        \node[draw, anchor=center, yshift=-1.4cm] at (backward2.center) (backward3) {Backward -- $\deepmodelbisTrained$};
        
        \node[anchor=east, xshift=-0.5cm] at (forward1.west) (shearmap1) {$\tilde\shearmap$};
        \node[anchor=east, xshift=-0.5cm] at (forward2.west) (shearmap2) {$\tilde\shearmap$};
        \node[anchor=east, xshift=-0.5cm] at (forward3.west) (shearmap3) {$\tilde\shearmap$};
        
        \node[anchor=west, xshift=0.5cm] at (backward2.east) (postmean) {$\convmapEstimate$ (point estimate)};
        \node[anchor=west, xshift=0.5cm] at (backward3.east) (postvar) {$\varmapEstimate$ (variance estimate)};

        \draw[->] (init) -- (forward1);
        \draw[->] (forward1.south east) -- (backward1.north west);
        \draw[->] (backward1.south west) -- (forward2.north east);
        \draw[->] (forward2.south east) -- (backward2.north west);
        \draw[->] (backward2.south west) -- (forward3.north east);
        \draw[->] (forward3.south east) -- (backward3.north west);
        
        \draw[->] (shearmap1) -- (forward1);
        \draw[->] (shearmap2) -- (forward2);
        \draw[->] (shearmap3) -- (forward3);
        
        \draw[->] (backward2) -- (postmean);
        \draw[->] (backward3) -- (postvar);
        
    \end{tikzpicture}
    \caption{Schematic representation of the PnPMass algorithm. Two iterations are shown for the point estimate (for illustration purposes only; more iterations are typically required in practice), followed by one additional iteration for the variance.}
    \label{fig:pnpmass}
\end{figure}

%% file: bibtex/refs.bib
@article{AbdarReviewUncertainty2021,
  title = {A Review of Uncertainty Quantification in Deep Learning: {{Techniques}}, Applications and Challenges},
  shorttitle = {A Review of Uncertainty Quantification in Deep Learning},
  author = {Abdar, Moloud and Pourpanah, Farhad and Hussain, Sadiq and Rezazadegan, Dana and Liu, Li and Ghavamzadeh, Mohammad and Fieguth, Paul and Cao, Xiaochun and Khosravi, Abbas and Acharya, U. Rajendra and Makarenkov, Vladimir and Nahavandi, Saeid},
  year = {2021},
  month = dec,
  njournal = {Information Fusion},
  journal = {Inf.\@ Fusion},
  volume = {76},
  pages = {243--297},
  issn = {1566-2535},
  doi = {10.1016/j.inffus.2021.05.008},
  urldate = {2023-10-24}
}

@inproceedings{Angelopoulos2022c,
  title = {Image-to-{{Image Regression}} with {{Distribution-Free Uncertainty Quantification}} and {{Applications}} in {{Imaging}}},
  booktitle = {Proceedings of the 39th {{International Conference}} on {{Machine Learning}}},
  author = {Angelopoulos, Anastasios N. and Kohli, Amit Pal and Bates, Stephen and Jordan, Michael and Malik, Jitendra and Alshaabi, Thayer and Upadhyayula, Srigokul and Romano, Yaniv},
  year = {2022},
  month = jun,
  pages = {717--730},
  publisher = {PMLR},
  issn = {2640-3498},
  urldate = {2023-11-13},
  langid = {english}
}

@article{AoyamaDenoisingWeak2025,
  title = {Denoising Weak Lensing Mass Maps with Diffusion Model: Systematic Comparison with Generative Adversarial Network},
  shorttitle = {Denoising Weak Lensing Mass Maps with Diffusion Model},
  author = {Aoyama, Shohei D. and Osato, Ken and Shirasaki, Masato},
  year = {2025},
  month = may,
  journal = {PASJ, submitted},
  eprint = {2505.00345},
  primaryclass = {astro-ph},
  doi = {10.48550/arXiv.2505.00345},
  urldate = {2025-06-02},
  archiveprefix = {arXiv}
}

@article{Bobin2012,
	title = {{CMB} {Map} {Restoration}},
	volume = {2012},
	copyright = {Copyright © 2012 J. Bobin et al.},
	issn = {1687-7977},
	url = {https://onlinelibrary.wiley.com/doi/abs/10.1155/2012/703217},
	doi = {10.1155/2012/703217},
	language = {en},
	number = {1},
	urldate = {2026-03-26},
	njournal = {Advances in Astronomy},
  journal = {Adv.\@ Astron.},
	author = {Bobin, J. and Starck, J.-L. and Sureau, F. and Fadili, J.},
	year = {2012},
	nnote = {\_eprint: https://onlinelibrary.wiley.com/doi/pdf/10.1155/2012/703217},
	pages = {703217},
}

@article{Ebner2024,
  title = {Plug-and-{{Play Image Reconstruction Is}} a {{Convergent Regularization Method}}},
  author = {Ebner, Andrea and Haltmeier, Markus},
  year = {2024},
  njournal = {IEEE Transactions on Image Processing},
  journal = {IEEE Trans.\@ Image Process.},
  volume = {33},
  pages = {1476--1486},
  issn = {1941-0042},
  doi = {10.1109/TIP.2024.3361218},
  urldate = {2024-11-22}
}

@inproceedings{Ekmekci2021,
  title = {What {{Does Your Computational Imaging Algorithm Not Know}}?: {{A Plug-and-Play Model Quantifying Model Uncertainty}}},
  shorttitle = {What {{Does Your Computational Imaging Algorithm Not Know}}?},
  booktitle = {Proceedings of the {{IEEE}}/{{CVF International Conference}} on {{Computer Vision}}},
  author = {Ekmekci, Canberk and Cetin, Mujdat},
  year = {2021},
  pages = {4018--4027},
  urldate = {2024-12-06},
  langid = {english}
}

@inproceedings{Fan2022,
  title = {{{SUNet}}: {{Swin Transformer UNet}} for {{Image Denoising}}},
  shorttitle = {{{SUNet}}},
  booktitle = {2022 {{IEEE International Symposium}} on {{Circuits}} and {{Systems}} ({{ISCAS}})},
  author = {Fan, Chi-Mao and Liu, Tsung-Jung and Liu, Kuan-Hsien},
  year = {2022},
  month = may,
  pages = {2333--2337},
  issn = {2158-1525},
  doi = {10.1109/ISCAS48785.2022.9937486},
  urldate = {2025-02-10}
}

@article{GibbsConformalPrediction2025,
	title = {Conformal prediction with conditional guarantees},
	volume = {87},
	issn = {1369-7412},
	url = {https://doi.org/10.1093/jrsssb/qkaf008},
	doi = {10.1093/jrsssb/qkaf008},
	number = {4},
	urldate = {2026-03-26},
	njournal = {Journal of the Royal Statistical Society Series B: Statistical Methodology},
  journal = {J.\@ R.\@ Stat.\@ Soc.\@ Ser.\@ B: Stat.\@ Methodol.},
	author = {Gibbs, Isaac and Cherian, John J and Candès, Emmanuel J},
	month = sep,
	year = {2025},
	pages = {1100--1126},
}

@inproceedings{Hurault2022,
  title = {Proximal {{Denoiser}} for {{Convergent Plug-and-Play Optimization}} with {{Nonconvex Regularization}}},
  booktitle = {Proceedings of the 39th {{International Conference}} on {{Machine Learning}}},
  author = {Hurault, Samuel and Leclaire, Arthur and Papadakis, Nicolas},
  year = {2022},
  month = jun,
  pages = {9483--9505},
  publisher = {PMLR},
  issn = {2640-3498},
  urldate = {2024-12-04},
  langid = {english}
}

@article{Jeffrey2020,
  title = {Deep Learning Dark Matter Map Reconstructions from {{DES SV}} Weak Lensing Data},
  author = {Jeffrey, Niall and Lanusse, Fran{\c c}ois and Lahav, Ofer and Starck, Jean-Luc},
  year = {2020},
  month = mar,
  njournal = {Monthly Notices of the Royal Astronomical Society},
  journal = {MNRAS},
  volume = {492},
  number = {4},
  pages = {5023--5029},
  issn = {0035-8711},
  doi = {10.1093/mnras/staa127},
  urldate = {2023-05-29}
}

@inproceedings{Jeffrey2020a,
  title = {Solving High-Dimensional Parameter Inference: Marginal Posterior Densities \& {{Moment Networks}}},
  shorttitle = {Solving High-Dimensional Parameter Inference},
  booktitle = {Third {{Workshop}} on {{Machine Learning}} and the {{Physical Sciences}} ({{NeurIPS}} 2020)},
  author = {Jeffrey, Niall and Wandelt, Benjamin D.},
  year = {2020},
  month = nov,
  eprint = {2011.05991},
  primaryclass = {astro-ph, stat},
  doi = {10.48550/arXiv.2011.05991},
  urldate = {2024-05-23},
  archiveprefix = {arXiv}
}

@article{Kaiser1993,
  title = {Mapping the Dark Matter with Weak Gravitational Lensing},
  author = {Kaiser, Nick and Squires, Gordon},
  year = {1993},
  njournal = {Astrophysical Journal},
  journal = {ApJ},
  volume = {404},
  number = {2},
  pages = {441--450},
  issn = {0004-637X},
  urldate = {2023-11-28}
}

@article{Kamilov2023,
  title = {Plug-and-{{Play Methods}} for {{Integrating Physical}} and {{Learned Models}} in {{Computational Imaging}}: {{Theory}}, Algorithms, and Applications},
  shorttitle = {Plug-and-{{Play Methods}} for {{Integrating Physical}} and {{Learned Models}} in {{Computational Imaging}}},
  author = {Kamilov, Ulugbek S. and Bouman, Charles A. and Buzzard, Gregery T. and Wohlberg, Brendt},
  year = {2023},
  month = jan,
  njournal = {IEEE Signal Processing Magazine},
  journal = {IEEE Signal Proc.\@ Mag.},
  volume = {40},
  number = {1},
  pages = {85--97},
  issn = {1558-0792},
  doi = {10.1109/MSP.2022.3199595},
  urldate = {2023-10-10}
}

@article{Kilbinger2015,
  title = {Cosmology with Cosmic Shear Observations: A Review},
  shorttitle = {Cosmology with Cosmic Shear Observations},
  author = {Kilbinger, Martin},
  year = {2015},
  month = jul,
  njournal = {Reports on Progress in Physics},
  journal = {Rep.\@ Prog.\@ Phys.},
  volume = {78},
  number = {8},
  pages = {086901},
  publisher = {IOP Publishing},
  issn = {0034-4885},
  doi = {10.1088/0034-4885/78/8/086901},
  urldate = {2023-06-11},
  langid = {english}
}

@article{Lanusse2016,
  title = {High Resolution Weak Lensing Mass Mapping Combining Shear and Flexion},
  author = {Lanusse, F. and Starck, J.-L. and Leonard, A. and Pires, S.},
  year = {2016},
  month = jul,
  njournal = {Astronomy \& Astrophysics},
  journal = {A\&A},
  volume = {591},
  pages = {A2},
  publisher = {EDP Sciences},
  issn = {0004-6361, 1432-0746},
  doi = {10.1051/0004-6361/201628278},
  urldate = {2024-04-16},
  copyright = {{\copyright} ESO, 2016},
  langid = {english}
}

@article{LaumontBayesianImaging2022,
  title = {Bayesian {{Imaging Using Plug}} \& {{Play Priors}}: {{When Langevin Meets Tweedie}}},
  shorttitle = {Bayesian {{Imaging Using Plug}} \& {{Play Priors}}},
  author = {Laumont, R{\'e}mi and Bortoli, Valentin De and Almansa, Andr{\'e}s and Delon, Julie and Durmus, Alain and Pereyra, Marcelo},
  year = {2022},
  month = jun,
  njournal = {SIAM Journal on Imaging Sciences},
  journal = {SIAM J.\@ Imaging Sci.},
  volume = {15},
  number = {2},
  pages = {701--737},
  publisher = {{Society for Industrial and Applied Mathematics}},
  doi = {10.1137/21M1406349},
  urldate = {2023-10-06}
}

@article{LetermeDistributionfreeUncertainty2025,
  title = {Distribution-Free Uncertainty Quantification for Inverse Problems: {{Application}} to Weak Lensing Mass Mapping},
  shorttitle = {Distribution-Free Uncertainty Quantification for Inverse Problems},
  author = {Leterme, H. and Fadili, J. and Starck, J.-L.},
  year = {2025},
  month = feb,
  njournal = {Astronomy \& Astrophysics},
  journal = {A\&A},
  volume = {694},
  pages = {A267},
  publisher = {EDP Sciences},
  issn = {0004-6361, 1432-0746},
  doi = {10.1051/0004-6361/202451756},
  urldate = {2025-03-05},
  copyright = {{\copyright} The Authors 2025},
  langid = {english}
}

@article{Mobasher2007,
  title = {Photometric {{Redshifts}} of {{Galaxies}} in {{COSMOS}}*},
  author = {Mobasher, B. and Capak, P. and Scoville, N. Z. and Dahlen, T. and Salvato, M. and Aussel, H. and Thompson, D. J. and Feldmann, R. and Tasca, L. and Lefevre, O. and Lilly, S. and Carollo, C. M. and Kartaltepe, J. S. and McCracken, H. and Mould, J. and Renzini, A. and Sanders, D. B. and Shopbell, P. L. and Taniguchi, Y. and Ajiki, M. and Shioya, Y. and Contini, T. and Giavalisco, M. and Ilbert, O. and Iovino, A. and Brun, V. Le and Mainieri, V. and Mignoli, M. and Scodeggio, M.},
  year = {2007},
  month = sep,
  njournal = {The Astrophysical Journal Supplement Series},
  journal = {ApJS},
  volume = {172},
  number = {1},
  pages = {117},
  issn = {0067-0049},
  doi = {10.1086/516590},
  urldate = {2024-07-12},
  langid = {english}
}

@article{Osato2021,
  title = {{{$\kappa$TNG}}: Effect of Baryonic Processes on Weak Lensing with {{IllustrisTNG}} Simulations},
  shorttitle = {{{$\kappa$TNG}}},
  author = {Osato, Ken and Liu, Jia and Haiman, Zolt{\'a}n},
  year = {2021},
  month = apr,
  njournal = {Monthly Notices of the Royal Astronomical Society},
  journal = {MNRAS},
  volume = {502},
  number = {4},
  pages = {5593--5602},
  issn = {0035-8711},
  doi = {10.1093/mnras/stab395},
  urldate = {2024-05-02}
}

@article{Pesquet2021,
  title = {Learning {{Maximally Monotone Operators}} for {{Image Recovery}}},
  author = {Pesquet, Jean-Christophe and Repetti, Audrey and Terris, Matthieu and Wiaux, Yves},
  year = {2021},
  month = jan,
  njournal = {SIAM Journal on Imaging Sciences},
  journal = {SIAM J.\@ Imaging Sci.},
  volume = {14},
  number = {3},
  pages = {1206--1237},
  publisher = {{Society for Industrial and Applied Mathematics}},
  doi = {10.1137/20M1387961},
  urldate = {2024-05-23}
}

@inproceedings{PostelsSamplingFreeEpistemic2019,
  title = {Sampling-{{Free Epistemic Uncertainty Estimation Using Approximated Variance Propagation}}},
  booktitle = {Proceedings of the {{IEEE}}/{{CVF International Conference}} on {{Computer Vision}}},
  author = {Postels, Janis and Ferroni, Francesco and Coskun, Huseyin and Navab, Nassir and Tombari, Federico},
  year = {2019},
  pages = {2931--2940},
  urldate = {2025-06-17}
}

@article{Remy2023,
  title = {Probabilistic Mass-Mapping with Neural Score Estimation},
  author = {Remy, B. and Lanusse, F. and Jeffrey, N. and Liu, J. and Starck, J.-L. and Osato, K. and Schrabback, T.},
  year = {2023},
  month = apr,
  njournal = {Astronomy \& Astrophysics},
  journal = {A\&A},
  volume = {672},
  pages = {A51},
  publisher = {EDP Sciences},
  issn = {0004-6361, 1432-0746},
  doi = {10.1051/0004-6361/202243054},
  urldate = {2023-10-10},
  copyright = {{\copyright} The Authors 2023},
  langid = {english}
}

@inproceedings{Romano2019,
  title = {Conformalized {{Quantile Regression}}},
  booktitle = {Advances in {{Neural Information Processing Systems}}},
  author = {Romano, Yaniv and Patterson, Evan and Candes, Emmanuel},
  year = {2019},
  volume = {32},
  publisher = {Curran Associates, Inc.},
  urldate = {2023-05-29}
}

@inproceedings{Ryu2019,
  title = {Plug-and-{{Play Methods Provably Converge}} with {{Properly Trained Denoisers}}},
  booktitle = {Proceedings of the 36th {{International Conference}} on {{Machine Learning}}},
  author = {Ryu, Ernest and Liu, Jialin and Wang, Sicheng and Chen, Xiaohan and Wang, Zhangyang and Yin, Wotao},
  year = {2019},
  month = may,
  pages = {5546--5557},
  publisher = {PMLR},
  issn = {2640-3498},
  urldate = {2024-11-22},
  langid = {english}
}

@article{Schrabback2010,
  title = {Evidence of the Accelerated Expansion of the {{Universe}} from Weak Lensing Tomography with {{COSMOS}}},
  author = {Schrabback, T. and Hartlap, J. and Joachimi, B. and Kilbinger, M. and Simon, P. and Benabed, K. and Brada{\v c}, M. and Eifler, T. and Erben, T. and Fassnacht, C. D. and High, F. William and Hilbert, S. and Hildebrandt, H. and Hoekstra, H. and Kuijken, K. and Marshall, P. J. and Mellier, Y. and Morganson, E. and Schneider, P. and Semboloni, E. and Waerbeke, L. Van and Velander, M.},
  year = {2010},
  month = jun,
  njournal = {Astronomy \& Astrophysics},
  journal = {A\&A},
  volume = {516},
  pages = {A63},
  publisher = {EDP Sciences},
  issn = {0004-6361, 1432-0746},
  doi = {10.1051/0004-6361/200913577},
  urldate = {2023-12-21},
  copyright = {{\copyright} ESO, 2010},
  langid = {english}
}

@article{Scoville2007,
  title = {{{COSMOS}}: {{Hubble Space Telescope Observations}}*},
  shorttitle = {{{COSMOS}}},
  author = {Scoville, N. and Abraham, R. G. and Aussel, H. and Barnes, J. E. and Benson, A. and Blain, A. W. and Calzetti, D. and Comastri, A. and Capak, P. and Carilli, C. and Carlstrom, J. E. and Carollo, C. M. and Colbert, J. and Daddi, E. and Ellis, R. S. and Elvis, M. and Ewald, S. P. and Fall, M. and Franceschini, A. and Giavalisco, M. and Green, W. and Griffiths, R. E. and Guzzo, L. and Hasinger, G. and Impey, C. and Kneib, J.-P. and Koda, J. and Koekemoer, A. and Lefevre, O. and Lilly, S. and Liu, C. T. and McCracken, H. J. and Massey, R. and Mellier, Y. and Miyazaki, S. and Mobasher, B. and Mould, J. and Norman, C. and Refregier, A. and Renzini, A. and Rhodes, J. and Rich, M. and Sanders, D. B. and Schiminovich, D. and Schinnerer, E. and Scodeggio, M. and Sheth, K. and Shopbell, P. L. and Taniguchi, Y. and Tyson, N. D. and Urry, C. M. and Waerbeke, L. Van and Vettolani, P. and White, S. D. M. and Yan, L.},
  year = {2007},
  month = sep,
  njournal = {The Astrophysical Journal Supplement Series},
  journal = {ApJS},
  volume = {172},
  number = {1},
  pages = {38},
  issn = {0067-0049},
  doi = {10.1086/516580},
  urldate = {2024-07-12},
  langid = {english}
}

@inproceedings{Shi2021,
  title = {Fast {{Uncertainty Quantification}} for {{Deep Object Pose Estimation}}},
  booktitle = {2021 {{IEEE International Conference}} on {{Robotics}} and {{Automation}} ({{ICRA}})},
  author = {Shi, Guanya and Zhu, Yifeng and Tremblay, Jonathan and Birchfield, Stan and Ramos, Fabio and Anandkumar, Animashree and Zhu, Yuke},
  year = {2021},
  month = may,
  pages = {5200--5207},
  issn = {2577-087X},
  doi = {10.1109/ICRA48506.2021.9561483},
  urldate = {2024-12-06}
}

@article{Shirasaki2019,
  title = {Denoising Weak Lensing Mass Maps with Deep Learning},
  author = {Shirasaki, Masato and Yoshida, Naoki and Ikeda, Shiro},
  year = {2019},
  month = aug,
  njournal = {Physical Review D},
  journal = {Phys.\@ Rev.\@ D},
  volume = {100},
  number = {4},
  pages = {043527},
  publisher = {American Physical Society},
  doi = {10.1103/PhysRevD.100.043527},
  urldate = {2023-05-29}
}

@article{Shirasaki2021,
  title = {Noise Reduction for Weak Lensing Mass Mapping: An Application of Generative Adversarial Networks to {{Subaru Hyper Suprime-Cam}} First-Year Data},
  shorttitle = {Noise Reduction for Weak Lensing Mass Mapping},
  author = {Shirasaki, Masato and Moriwaki, Kana and Oogi, Taira and Yoshida, Naoki and Ikeda, Shiro and Nishimichi, Takahiro},
  year = {2021},
  month = jun,
  njournal = {Monthly Notices of the Royal Astronomical Society},
  journal = {MNRAS},
  volume = {504},
  number = {2},
  pages = {1825--1839},
  issn = {0035-8711},
  doi = {10.1093/mnras/stab982},
  urldate = {2024-04-17}
}

@article{Starck2013,
  title = {Sparsity and the {{Bayesian}} Perspective},
  author = {Starck, J.-L. and Donoho, D. L. and Fadili, M. J. and Rassat, A.},
  year = {2013},
  month = apr,
  njournal = {Astronomy \& Astrophysics},
  journal = {A\&A},
  volume = {552},
  pages = {A133},
  publisher = {EDP Sciences},
  issn = {0004-6361, 1432-0746},
  doi = {10.1051/0004-6361/201321257},
  urldate = {2024-04-24},
  copyright = {{\copyright} ESO, 2013},
  langid = {english}
}

@book{Starck2015,
  title = {Sparse {{Image}} and {{Signal Processing}}: {{Wavelets}} and {{Related Geometric Multiscale Analysis}}},
  shorttitle = {Sparse {{Image}} and {{Signal Processing}}},
  author = {Starck, Jean-Luc and Murtagh, Fionn and Fadili, Jalal},
  year = {2015},
  month = oct,
  publisher = {Cambridge University Press},
  googlebooks = {wUy2CgAAQBAJ},
  isbn = {978-1-107-08806-1},
  langid = {english}
}

@article{Starck2021,
  title = {Weak-Lensing Mass Reconstruction Using Sparsity and a {{Gaussian}} Random Field},
  author = {Starck, J.-L. and Themelis, K. E. and Jeffrey, N. and Peel, A. and Lanusse, F.},
  year = {2021},
  month = may,
  njournal = {Astronomy \& Astrophysics},
  journal = {A\&A},
  volume = {649},
  pages = {A99},
  publisher = {EDP Sciences},
  issn = {0004-6361, 1432-0746},
  doi = {10.1051/0004-6361/202039451},
  urldate = {2023-10-06},
  copyright = {{\copyright} J.-L. Starck et al. 2021},
  langid = {english}
}

@inproceedings{Terris2020,
  title = {Building {{Firmly Nonexpansive Convolutional Neural Networks}}},
  booktitle = {{{IEEE International Conference}} on {{Acoustics}}, {{Speech}} and {{Signal Processing}} ({{ICASSP}})},
  nbooktitle = {{{ICASSP}} 2020 - 2020 {{IEEE International Conference}} on {{Acoustics}}, {{Speech}} and {{Signal Processing}} ({{ICASSP}})},
  author = {Terris, Matthieu and Repetti, Audrey and Pesquet, Jean-Christophe and Wiaux, Yves},
  year = {2020},
  month = may,
  npages = {8658--8662},
  issn = {2379-190X},
  doi = {10.1109/ICASSP40776.2020.9054731},
  urldate = {2024-12-04}
}

@article{TerrisAIRIPlugandplayAlgorithm2025,
  title = {The {{AIRI}} Plug-and-Play Algorithm for Image Reconstruction in Radio-Interferometry: Variations and Robustness},
  shorttitle = {The {{AIRI}} Plug-and-Play Algorithm for Image Reconstruction in Radio-Interferometry},
  author = {Terris, Matthieu and Tang, Chao and Jackson, Adrian and Wiaux, Yves},
  year = {2025},
  month = feb,
  njournal = {Monthly Notices of the Royal Astronomical Society},
  journal = {MNRAS},
  volume = {537},
  number = {2},
  pages = {1608--1619},
  issn = {0035-8711},
  doi = {10.1093/mnras/staf022},
  urldate = {2025-10-10}
}

@article{TersenovImpactWeaklensingMassmapping2025,
  title = {Impact of Weak-Lensing Mass-Mapping Algorithms on Cosmology Inference},
  author = {Tersenov, Andreas and Baumont, Lucie and Starck, Jean-Luc and Kilbinger, Martin},
  year = {2025},
  month = jun,
  njournal = {Astronomy \& Astrophysics},
  journal = {A\&A},
  volume = {698},
  pages = {A25},
  publisher = {EDP Sciences},
  issn = {0004-6361, 1432-0746},
  doi = {10.1051/0004-6361/202553707},
  urldate = {2025-10-10},
  copyright = {{\copyright} The Authors 2025},
  langid = {english}
}

@inproceedings{Venkatakrishnan2013,
  title = {Plug-and-{{Play}} Priors for Model Based Reconstruction},
  booktitle = {2013 {{IEEE Global Conference}} on {{Signal}} and {{Information Processing}}},
  author = {Venkatakrishnan, Singanallur V. and Bouman, Charles A. and Wohlberg, Brendt},
  year = {2013},
  month = dec,
  pages = {945--948},
  doi = {10.1109/GlobalSIP.2013.6737048},
  urldate = {2023-11-09}
}

@article{WhitneyGenerativeModellingMassmapping2025,
  title = {Generative Modelling for Mass-Mapping with Fast Uncertainty Quantification},
  author = {Whitney, Jessica J and Liaudat, Tob{\'i}as I and Price, Matthew A and Mars, Matthijs and McEwen, Jason D},
  year = {2025},
  month = sep,
  njournal = {Monthly Notices of the Royal Astronomical Society},
  journal = {MNRAS},
  volume = {542},
  number = {3},
  pages = {2464--2479},
  issn = {0035-8711},
  doi = {10.1093/mnras/staf1356},
  urldate = {2025-10-10}
}

@article{ZhangPlugandPlayImage2022,
  title = {Plug-and-{{Play Image Restoration With Deep Denoiser Prior}}},
  author = {Zhang, Kai and Li, Yawei and Zuo, Wangmeng and Zhang, Lei and Van Gool, Luc and Timofte, Radu},
  year = {2022},
  month = oct,
  njournal = {IEEE Transactions on Pattern Analysis and Machine Intelligence},
  journal = {IEEE Trans.\@ Pattern Anal.\@ Mach.\@ Intell.},
  volume = {44},
  number = {10},
  pages = {6360--6376},
  issn = {1939-3539},
  doi = {10.1109/TPAMI.2021.3088914},
  urldate = {2025-05-15}
}

@article{FinoguenovXmmNewton2007,
	title = {The {XMM}-{Newton} {Wide}-{Field} {Survey} in the {COSMOS} {Field}: {Statistical} {Properties} of {Clusters} of {Galaxies}},
	volume = {172},
	issn = {0067-0049},
	shorttitle = {The {XMM}-{Newton} {Wide}-{Field} {Survey} in the {COSMOS} {Field}},
	url = {https://doi.org/10.1086/516577},
	doi = {10.1086/516577},
	language = {en},
	number = {1},
	urldate = {2026-03-06},
	njournal = {The Astrophysical Journal Supplement Series},
  journal = {ApJS},
	author = {Finoguenov, A. and Guzzo, L. and Hasinger, G. and Scoville, N. Z. and Aussel, H. and Böhringer, H. and Brusa, M. and Capak, P. and Cappelluti, N. and Comastri, A. and Giodini, S. and Griffiths, R. E. and Impey, C. and Koekemoer, A. M. and Kneib, J.-P. and Leauthaud, A. and Le Fèvre, O. and Lilly, S. and Mainieri, V. and Massey, R. and McCracken, H. J. and Mobasher, B. and Murayama, T. and Peacock, J. A. and Sakelliou, I. and Schinnerer, E. and Silverman, J. D. and Smolčić, V. and Taniguchi, Y. and Tasca, L. and Taylor, J. E. and Trump, J. R. and Zamorani, G.},
	month = sep,
	year = {2007},
	pages = {182},
}

@article{TachellaDeepInverse2025,
    title = {DeepInverse: A Python package for solving imaging inverse problems with deep learning},
    njournal = {Journal of Open Source Software},
    journal = {J.\@ Open Source Software},
    doi = {10.21105/joss.08923},
    url = {https://doi.org/10.21105/joss.08923},
    year = {2025},
    publisher = {The Open Journal},
    volume = {10},
    number = {115},
    pages = {8923},
    author = {Tachella, Julián and Terris, Matthieu and Hurault, Samuel and Wang, Andrew and Davy, Leo and Scanvic, Jérémy and Sechaud, Victor and Vo, Romain and Moreau, Thomas and Davies, Thomas and Chen, Dongdong and Laurent, Nils and Monroy, Brayan and Dong, Jonathan and Hu, Zhiyuan and Nguyen, Minh-Hai and Sarron, Florian and Weiss, Pierre and Escande, Paul and Massias, Mathurin and Modrzyk, Thibaut and Levac, Brett and Liaudat, Tobías I. and Song, Maxime and Hertrich, Johannes and Neumayer, Sebastian and Schramm, Georg},
}
